\documentclass[sigconf,nonacm]{acmart}

\renewcommand\footnotetextcopyrightpermission[1]{}

\usepackage{amsmath,amsfonts,amsthm}
\usepackage{algorithmic}
\usepackage{graphicx}
\usepackage{textcomp}
\usepackage{xcolor}
\usepackage{subcaption}
\usepackage{multirow}
\usepackage{enumitem}
\usepackage{units}
\usepackage[ruled,linesnumbered]{algorithm2e}
\usepackage{balance}
\usepackage{tabularx}
\usepackage{adjustbox}
\usepackage{empheq,eqparbox}
\usepackage{makecell}
\usepackage{array}
\usepackage{xurl}
\usepackage{hyperref}
\usepackage{framed}

\theoremstyle{remark}
\newtheorem{example}{Example}
\theoremstyle{plain}
\newtheorem{assumption}{Assumption}
\newtheorem{lemma}{Lemma}

\theoremstyle{remark}

\AtBeginDocument{%
  }

\makeatletter
\fancypagestyle{firstpageieee}{%
  \fancyhf{}%
  \fancyfoot[C]{%
    \parbox{0.92\textwidth}{%
      \centering\footnotesize
      This work has been submitted to the IEEE for possible publication.
      Copyright may be transferred without notice, after which this version may no longer be accessible.
    }%
  }%
}
\makeatother

\begin{document}

\newcommand{\name}[1]{\textsf{PiLLar}}
\title{PiLLar: Matching for Pivot Table Schema via LLM-guided Monte-Carlo Tree Search}

\author{Yunjun Gao}
\affiliation{
    \institution{Zhejiang University}
}
\email{gaoyj@zju.edu.cn}

\author{Chuangyu Ouyang}
\affiliation{
    \institution{Zhejiang University}
}
\email{cy.ouyang@zju.edu.cn}

\author{Congcong Ge}
\affiliation{
    \institution{Zhejiang University}
}
\email{gcc@zju.edu.cn}

\author{Yifan Zhu}
\affiliation{
    \institution{Zhejiang University}
}
\email{xtf\_z@zju.edu.cn}


\begin{abstract}
Pivot tables are ubiquitous in data lakes of modern data ecosystems, making accurate schema matching over pivot tables a key prerequisite for data integration.
In this paper, we focus on \emph{matching for pivot table schema}, which is a novel joint schema-value matching task.
It aims to align schemas between pivot tables and standard relational tables, where a correct match must be semantically consistent at the schema level and compatible at the value level.
However, due to the inherent data sensitivity of this task, the prevalence of anonymized data in practice poses significant challenges to its matching accuracy and generalization capability.
To tackle these challenges, we propose \name~, the first matching for pivot table schema framework.
We first formulate \name~ as an \emph{LLM-driven search paradigm} that operates with minimal annotated privacy-compliant data, thereby achieving training-free adaptation across diverse domains.
Next, we provide a \emph{theoretical analysis} on the error dynamics of the paradigm to ensure the asymptotic convergence of the proposed method.
Furthermore, we introduce a new benchmark \textsc{PTbench}, derived from \emph{four} representative real-world domains and constructed by mining unpivot-suitable tables, performing unpivot on semantically coherent attributes, and applying sampling and anonymization. Extensive experiments demonstrate the superiority of \name~, which achieves an average accuracy of $87.94\%$ on the correctly predicted matches.
\end{abstract}


\begin{CCSXML}
<ccs2012>
   <concept>
       <concept_id>10002951.10002952.10003219.10003215</concept_id>
       <concept_desc>Information systems~Extraction, transformation and loading</concept_desc>
       <concept_significance>500</concept_significance>
       </concept>
   <concept>
       <concept_id>10010147.10010178.10010179.10003352</concept_id>
       <concept_desc>Computing methodologies~Information extraction</concept_desc>
       <concept_significance>300</concept_significance>
       </concept>
 </ccs2012>
\end{CCSXML}

\ccsdesc[500]{Information systems~Extraction, transformation and loading}
\ccsdesc[300]{Computing methodologies~Information extraction}

\keywords{Schema Matching, Pivot Tables, Large Language Models, Monte-Carlo Tree Search}


\maketitle

\thispagestyle{firstpageieee}

\section{Introduction}
\label{sec:intro}

It is becoming increasingly easier for companies to acquire large amounts of data from diverse sources~\cite{armbrust2020delta, rethinkdata}.
This trend enables SaaS providers (e.g., Salesforce) to deliver richer data analysis capabilities~\cite{powerbi, fabric, crm, looker} by integrating or linking datasets from different sources.
\emph{Schema matching}~\cite{rahm2001survey} serves as a prerequisite for such integration.
It aims to identify the semantic correspondence between attributes across disparate data sources.
Recently, researchers have devoted considerable efforts to schema matching on \emph{standard relational tables}~\cite{doan2000learning, shraga2020adnev, liu2024gram}.
However, they overlook schema matching for \emph{pivot tables}, a task of critical importance given their ubiquity in enterprise reporting and business intelligence tools~\cite{jansen2018use, cho2025data, shadow_it_spreadsheets}.


\noindent \textbf{New task -- matching for pivot table schema.}
In modern data ecosystems, the prevalence of data lakes has led to the proliferation of diverse wild tables~\cite{huang2018auto, wang2019uni}, among which pivot tables are ubiquitous~\cite{jansen2018use}.
Pivot operations transpose values into attribute headers, which obscure semantics and complicate schema matching~\cite{wickham2014tidy}.
Meanwhile, pivot and unpivot are integral reshaping transformations in business intelligence (BI) and machine learning (ML) data preparation pipelines~\cite{yan2020auto, yang2021auto}.
Thus, effectively matching between pivot tables and standard relational tables is essential for facilitating critical tasks such as master data management~\cite{loshin2010master, informatica-mdm} and cross-system data interoperability~\cite{rahm2001survey, wilkinson2016fair}. 
Notably, matching for pivot table schema introduces an additional data quality pitfall beyond conventional schema matching: the choice of the unpivot attribute set implicitly determines the semantics of the generated \textit{Metric--Value} fields, and an incompatible choice can silently alter attribute semantics.
Such subtle errors arising from changes in data semantics can be difficult to detect yet can severely disrupt downstream decision-making and ML pipelines~\cite{schelter2018automating}.
Accordingly, industrial data governance and profiling systems emphasize early validation of both schema evolution and semantic drift to prevent error propagation along data pipelines~\cite{informatica2025cloud,palantir2021trust}.
Therefore, matching for pivot table schema requires a joint schema--value perspective to ensure both schema-level semantic correspondence and value compatibility. Solving this task is non-trivial, and existing approaches fall short. We illustrate the challenges in Example~\ref{example:intro}.


\begin{figure*}[t]
    \includegraphics[width=\linewidth]{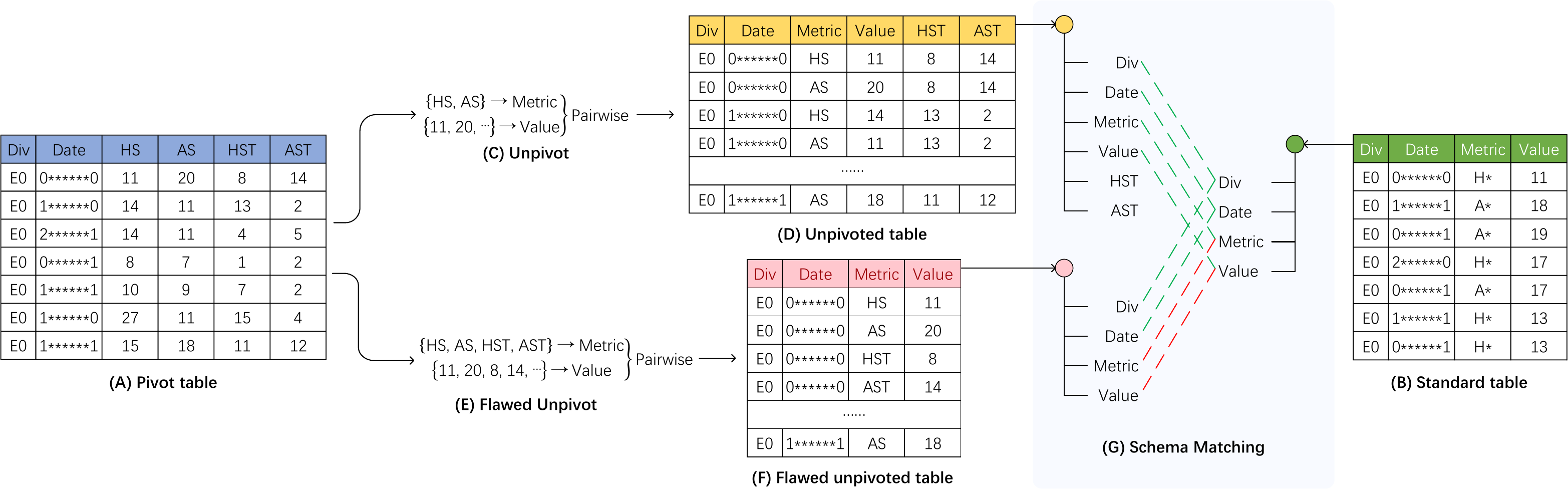}
    \vspace{-6mm}
    \caption{An example of performing matching for pivot table schema with separate unpivot and schema matching steps}
    \label{fig:example}
    \vspace{-3mm}
\end{figure*}

\begin{example}
\label{example:intro}
We consider two datasets from Football-Data~\cite{football-data}, recording information about football match results.
Matching schemas between these datasets enables data integration for downstream team and tactical analysis.
Figure~\ref{fig:example} depicts a snippet.
Figure~\ref{fig:example}(A) is a pivot table $T_l$ that reports, for each match, four shot-related statistics as attributes: home shots (\texttt{HS}), away shots (\texttt{AS}), home shots on target (\texttt{HST}), and away shots on target (\texttt{AST}).
Figure~\ref{fig:example}(B) is a standard relational table $T_r$ that adopts a pairwise \texttt{Metric}--\texttt{Value} schema, where each row corresponds to one metric-value record of a match, facilitating aggregation and comparison.
In addition, data records of \texttt{Date} and \texttt{Metric} attributes are anonymized to protect sensitive information. Specifically, the \texttt{Date} values may reveal schedules of football teams; the \texttt{Metric} values often encode proprietary performance indicators used by clubs or analytics providers.
To correctly align $T_l$ with $T_r$, one must first identify the \emph{value-compatible} unpivot attribute set $\{\texttt{HS}, \texttt{AS}\}$ and unpivot $T_l$ into the intended shot-count records shown in Figure~\ref{fig:example}(D). Schema matching is then performed between the unpivoted table $T'_l$ and $T_r$, obtaining the correct matches illustrated in Figure~\ref{fig:example}(G).
\end{example}
\vspace{-1mm}


\vspace{0.1in}
\textbf{Challenge I:} \textit{How to guarantee the matching for pivot table schema accuracy with anonymous data?}
In practice, schema matching typically occurs in cross-departmental collaboration~\cite{nadal2022operationalizing}.
In this scenario, data providers can access table metadata (i.e., attributes) but are denied access to the actual data records due to privacy policy, wherein sensitive data is often anonymized.
Conventional schema matching methods rely primarily on features derived from attributes themselves.
Data records are optional for providing external knowledge~\cite{do2002coma, zhang2023schema, liu2024gram}.
In contrast, matching for pivot table schema highly demands access to data records to identify the value-compatible unpivot attribute set. For example, in Figure~\ref{fig:example}, the candidate set $\{\texttt{HS}, \texttt{AS}, \texttt{HST}, \texttt{AST}\}$ is reasonable if one only considers the schema of $T_l$. However, this selection would lead to a resulting unpivoted table (as shown in Figure~\ref{fig:example}(F)) that conflates distinct semantics, namely shots and shots on target, making the generated fields semantically inconsistent with the standard table $T_r$. Under such semantic drift, a matcher may still produce seemingly plausible matching at the schema level, but it becomes incorrect at the value level.
This ambiguity can only be resolved through record-level data patterns.
Such a strong requirement for data records conflicts with the recently growing privacy concerns of data providers in practice~\cite{regulationeu, ccpa}.
Therefore, our work focuses on matching schemas under the constraint of \textit{data minimization}, which is a widely accepted concept in data protection regulations and commercial systems~\cite{regulationeu, gdpr, palantir2024privacy}, to achieve a deliberate balance between privacy preservation and matching accuracy.
\vspace{0.1in}
\textbf{Challenge II:} \textit{How to effectively perform matching for pivot table schema?}
The dual demands of generality and data privacy make Large Language Models (LLMs) particularly suitable for this task. LLMs with billions of parameters exhibit strong zero-shot/few-shot generalization capabilities~\cite{brown2020language}, which facilitates matching for pivot table schema across diverse domains without requiring sensitive data or intensive training. 
Despite this, their inherent instability means that even the most advanced LLMs are not a reliable standalone solution.
We still take Figure~\ref{fig:example} as an example.
Current schema matching approaches cannot directly match these attribute groups, as they ignore the transformations of schema structure.
A straightforward solution to perform matching for pivot table schema on this example is to first unpivot $T_l$ and then perform conventional schema matching.
Yet $T_l$ contains a candidate set of attributes that are probably to be unpivoted, namely $\mathcal{A_{\text{cand.}}}=\{\texttt{HS}, \texttt{AS}, \texttt{HST}, \texttt{AST}\}$.
Experiments show that both open-source and closed-source state-of-the-art LLMs generate the unpivot result $\mathcal{A}_\text{unpivot} = \{\texttt{HS}, \texttt{AS}, \texttt{HST}, \texttt{AST}\}$  (detailed results are presented in Appendix~\ref{app:case_study}).
This result is reasonable when querying only for attributes unpivotable in the schema of $T_l$, but it is incorrect in our task since we aim to obtain correct matching results between the two input tables.
Although extensive SOTA LLM-driven approaches have been devoted to schema matching, matching schemas on these flawed input tables can only amplify the error, as described in Example~\ref{example:intro}.
Hence, the core challenge is to mitigate error propagation by jointly ensuring a value-compatible unpivot attribute set choice and a verifiable schema matching to the standard table.
In light of these challenges, we make the following contributions:


\begin{itemize}[leftmargin=*]
    \item \textit{Flexible Framework}.
    We propose \name~, the first LLM-guided search framework for matching pivot table schemas, requiring only few-shot labeled anonymized data and enabling training-free adaptation across domains.
    \item \textit{Convergent Search Paradigm}.
    We formulate the task as a bounded-stochastic search to mitigate LLM instability, guaranteeing both exploration completeness and search efficiency. We further provide a formal analysis establishing asymptotic convergence.
    \item \textit{Self-correcting Iterative Search Strategy}.
    We design an \textit{identifier–-judger} iteration in which robust prompts drive the identifier to propose candidate unpivot attribute set, and a multi-dimensional validator serving as a judger provides per-iteration feedback that steers the identifier in the next iteration, mitigating unreliable unpivoting and error propagation during search.
    \item \textit{Extensive Experiments}.
    We propose a new benchmark from four real-world domains and show the effectiveness of \name~, which achieves an average accuracy of $87.94\%$ on the correctly predicted matches and $94.45\%$ on the correctly operated attributes.
\end{itemize}

\section{Problem Statement}
\label{sec:preliminaries}




Given a relational table $T = (\mathcal{A}, \mathcal{D}, \mathcal{V})$, where $\mathcal{A}$ refers to the attributes, $\mathcal{D}$ refers to the description of attributes, and $\mathcal{V}$ refers to the sampled anonymized data records.
In this paper, we focus on the task of matching for pivot table schema.
Let $T_l = (\mathcal{A}_l, \mathcal{D}_l, \mathcal{V}_l)$ and  $T_r = (\mathcal{A}_r, \mathcal{D}_r, \mathcal{V}_r)$ be the two given tables.
The objective is to first identify the unpivot operator $\Phi = (\mathcal{A}_\text{unpivot}, A_\text{var}, A_\text{value})$ that can transform $T_l$ into $T'_l$ to align with $T_r$, where $\mathcal{A}_\text{unpivot}$ represents unpivot attribute set, $A_\text{var}$ represents the attribute name derived from the unpivoted attribute names, and $A_\text{value}$ represents the attribute name derived from the corresponding data records. The matching result $\pi$ is then generated between the two attribute sets.
Here, we consider one-to-one matches where a match specifies that the two attributes are equal to each other, which is a common assumption in schema matching~\cite{doan2000learning, zhang2023schema, liu2024gram, seedat2024matchmaker, zhang2025smutf}.
Formally,
\begin{align}
\pi : & ~\mathcal{A}'_l \to \mathcal{A}_r \cup \{\text{Null}\} \\
\mathrm{s.t.} & ~\forall A_i, A_j \in \mathcal{A}'_l, ~\pi(A_i) \neq \pi(A_j) ~\lor \pi(A_i) = \text{Null} \nonumber
\end{align}

\section{PiLLar Framework}
\label{sec:framework}

In this section, we describe the framework of \name~ in detail.
Since we formulate matching for pivot table schema as a search problem, we first introduce the proposed \emph{search paradigm}; we then detail the \emph{self-correcting iterative search strategy} of \name~.

\subsection{Search Paradigm}
\label{sec:paradigm}

Recall that LLM is a powerful tool for matching for pivot table schema. However, due to the inherent hallucination problem of LLMs~\cite{farquhar2024detecting}, it is unsafe to rely solely on LLM generation, as described in Section~\ref{sec:intro}.
Considering that \emph{search paradigm} can effectively mitigate the problem of losing correct answers caused by LLM's uncertainty~\cite{shorinwa2025survey}, we would like to formulate the task of matching for pivot table schema as an LLM-guided search problem.
For matching for pivot table schema, identifying the attributes to be unpivoted is a necessary step; however, exploring the complete search space of all attribute subsets is computationally prohibitive, with a complexity of $O(2^n)$. Motivated by the effectiveness of MCTS in balancing the accuracy and efficiency in search problems~\cite{browne2012survey}, we propose a bounded-stochastic MCTS variant guided by the LLM. To ensure theoretical convergence—and thereby mitigate potential hallucination from the LLM—we incorporate a \emph{bounded-stochastic expansion strategy} into the search process. Detailed theoretical analysis can be found in Section~\ref{sec:analysis}.


\subsubsection{Overview of the Search Process}
\label{sec:search_overview}

~

We first outline how the proposed bounded-stochastic MCTS variant is involved in our framework. Starting from an initial candidate unpivot attribute set, the MCTS repeatedly executes four phases in each iteration: selection, expansion, evaluation and backpropagation. During the selection phase, the search process traverses the current search tree by applying a UCT policy to identify the most promising node to be expanded. In the expansion phase, a new candidate set is generated by our proposed bound-stochastic expansion policy. Once generated, the evaluation phase performs schema matching between the unpivoted table and the standard table, producing a deterministic reward. Finally, in the backpropagation phase, this reward is propagated along the visited path, which gradually biases the search toward high-quality candidates. Detailed implementation of this process is presented in Section~\ref{sec:realization}.
Then, we detail the design of our proposed search paradigm, together with a theoretical convergence analysis.

\subsubsection{Search and Update Mechanism}
\label{sec:mechanism}

~

\textbf{Bounded Stochastic Strategy Design.}
To ensure theoretical convergence while maintaining LLM guidance, we adopt a bounded-stochastic strategy in the expansion phase. During expansion, a new child is produced either by an LLM-guided generation with probability $1 - \varepsilon$, or by a radius-$1$ random modification (i.e., a single add/remove/swap operation of one attribute) with probability $\varepsilon>0$. Each set generated by the random modification is prohibited from being regenerated by it. This mechanism guarantees that all feasible candidates are theoretically reachable (formal proof can be found in Appendix~\ref{app:proofs}). Combined with the LLM-guided generation, \name~ reaches a balance between directed reasoning and theoretical search completeness, enabling the search to cover potential candidate unpivot attribute sets without exhaustive enumeration.


\textbf{Node Evaluation and Reward Propagation.} For each candidate attribute set, we generate a corresponding schema matching result and a quantized reward. To ensure stability and analytical tractability, \name~ adopts a deterministic reward formulation. The reward of each node is designed to be bounded and noise-free, ensuring that every evaluation consistently reflects the true quality of the node. Formally, we assume the reward $R(v)$ of a node $v$ satisfies $R(v) \in [0, \Omega(T_l, T_r)], \forall v$, where $\Omega(T_l, T_r)$ is a deterministic upper bound of the reward based solely on the given input tables.

Once a node is evaluated, its estimated reward is updated through the \textbf{max–average backpropagation rule}:
\begin{equation}
    \label{eq:maxavg}
    Q_{v_i} \leftarrow \frac{1}{2} \left(Q_{v_i} + \max \left\{ R_{v_i}, \max_{v_j \in \mathrm{children}(v_i)} Q_{v_j} \right\} \right)
\end{equation}
where $Q_{v_i}$ represents the estimated value for node $v_i$, and $R_{v_i}$ represents the reward of node $v_i$.  
This update design reflects the optimization-oriented role of our framework: instead of using simulated rollouts to estimate the value of a node as in traditional MCTS, each node’s $Q$-value in our framework measures the potential improvement obtainable by further exploration from that node. Therefore, only better descendants are allowed to update their ancestors through the $\max$ operator, guiding the search toward regions of higher potential.

Together, the $\varepsilon$-random expansion and the max–average backpropagation construct a bounded-stochastic MCTS variant that achieves asymptotic search completeness, while preserving the efficiency advantages of LLM-guided reasoning.

\begin{figure*}[t]
    \includegraphics[width=\linewidth]{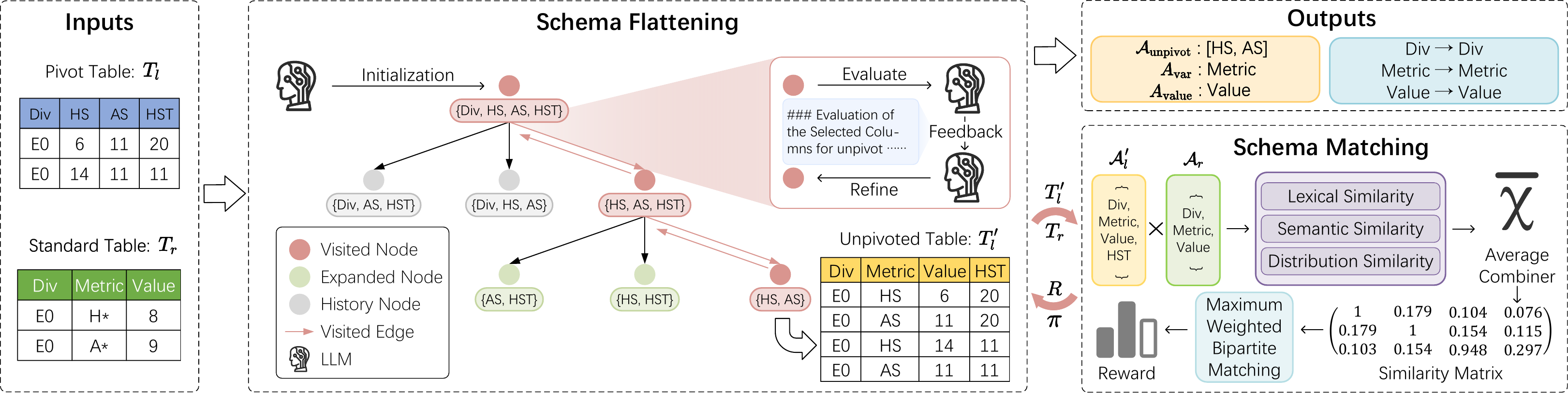}
    \caption{Overview of the \name~ framework}
    \label{fig:overview}
    \vspace{-2mm}
\end{figure*}

\subsubsection{Theoretical Analysis}
\label{sec:analysis}

To understand how bounded stochasticity influences convergence, we analyze the error dynamics, including weak-hit disturbance, a single-hit contraction kernel, and an asymptotic block recursion.

\vspace{0.1in}
\textbf{Analysis Setup.}
Let $\mathcal{S}$ be the finite candidate unpivot attribute sets (defined by the finite attribute universe). Each search-tree node $v$ encodes a candidate $s=\tau(v) \in \mathcal{S}$ (where $\tau$ is many-to-one since Self-Refine may generate distinct nodes for the same $s$). For each node $v$, its deterministic reward is denoted by $R(v) \in [0, \Omega(T_l, T_r)]$. Suppose there exists an optimal node $v^\star$ corresponding to the optimal candidate $s^\star = \tau(v^\star)$, such that $R(v^\star) = \max_{v} R(v)$. Let $(v_0, \ldots, v_H)$ denote a minimal witnessing path from the root (depth $0$) to $v^\star$ (depth $0 \leq H \leq \lvert \mathcal{S} \rvert$). Let $e_d$ be the absolute error at depth $d$ with respect to $R(v^\star)$ before an iteration’s backpropagation, and $e'_d$ the error after that backpropagation in the same iteration.

\begin{proof}[Sketch]
We sketch the argument and defer details to Appendix~\ref{app:proofs}.
Since $\mathcal{S}$ is finite and expansion is $\varepsilon$-randomized, every feasible candidate is generated with probability~$1$ (probabilistic completeness). Once the optimal node $v^\star$ is discovered, the max--average backup yields a contraction effect on the root's estimation error whenever $v^\star$ is reached and backpropagated. Moreover, under UCT with deterministic rewards, suboptimal selections become asymptotically negligible, so the disturbance from weak updates vanishes. Therefore the root estimate converges asymptotically to $R(v^\star)$.
\end{proof}

\subsection{Self-correcting Iterative Search Strategy}
\label{sec:realization}

Based on the bounded stochastic search paradigm described in Section~\ref{sec:paradigm}, we now detail how to perform \name~ in an iterative manner via two key components, i.e., \emph{schema flattening} and \emph{schema matching}.
Figure \ref{fig:overview} depicts an overview.

\subsubsection{Schema Flattening}
This component serves as an \emph{identifier}, which aims to flatten the schema of the input pivot table $T_l$ into a standard format that complies with the input standard table $T_r$.
It leverages the proposed LLM-guided MCTS variant to explore candidate unpivot operators.
To better leverage the semantic capability of LLMs, we adopt the Self-Refine~\cite{madaan2023self} mechanism to optimize the generated operators. Self-Refine provides a feedback-controlled update mechanism that turns unguided exploration into a directional process that incrementally improves candidate quality. Besides, this mechanism can integrate well with MCTS, as the tree structure naturally preserves refinement paths while maintaining candidate diversity through branching exploration. Under this design, each node on the Monte-Carlo tree represents a candidate unpivot operator, and each edge represents a Self-Refine/random radius-$1$ modification process.

Schema flattening consists of four phases, namely \textit{initialization}, \textit{selection}, \textit{expansion} and \textit{backpropagation}.
The detailed description of the backpropagation phase can be found in Section~\ref{sec:mechanism}.
The details of initialization, selection, and expansion are described below.

\textbf{Initialization.} It acts as the start of the \name~ framework. During this phase, the root node of the Monte-Carlo tree is generated by querying LLM for an initial candidate unpivot attribute set.

\begin{figure}[t]
    \centering
    \includegraphics[width=\linewidth]{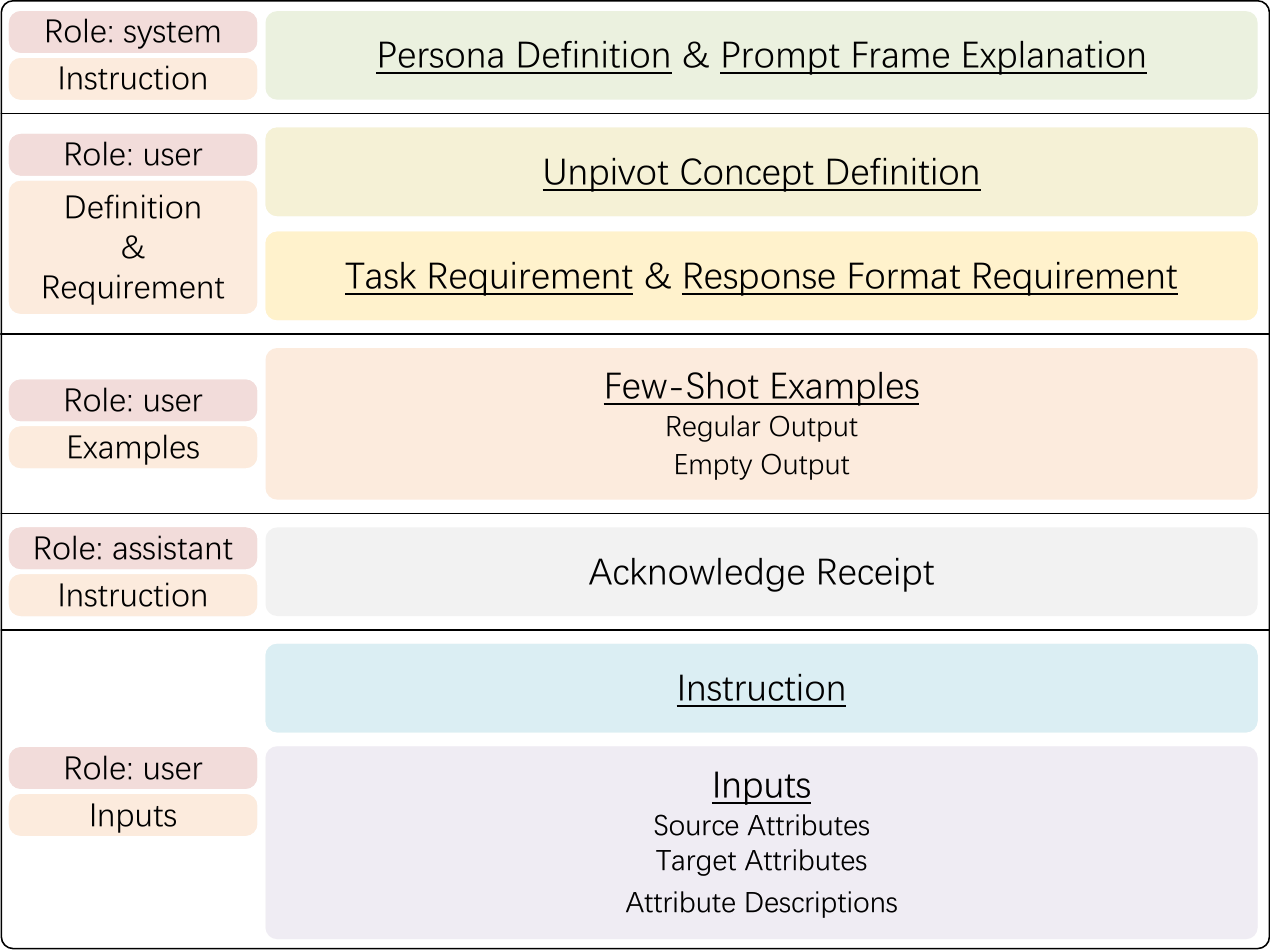}
    \vspace{-5mm}
    \caption{An example of an initialization prompt template}
    \label{fig:init_prompt}
    \vspace{-4mm}
\end{figure}

Motivated by the \textit{clear layout} prompt style~\cite{dong2023c3}, we design a standardized initialization prompt template to facilitate this generation, as shown in Figure~\ref{fig:init_prompt}. The prompt begins with a persona definition and a prompt frame explanation, which are system-level instructions that define the model's persona and the frame of the following prompt.
This instruction ensures domain alignment and establishes a consistent interpretive frame before specific requirements and input are provided. 
This is followed by the definition and requirement section, which clarifies the unpivot concept definition, introduces the task of detecting unpivotable attributes in task requirement and specifies the output in response format requirement.
Here, LLM is required to select more attributes within reason so that in the \textbf{expansion} phase we can ask LLM to unidirectionally reduce the attributes in the candidate unpivot attribute set, and thus avoid getting stuck in a cycle.
After that, we provide two illustrative \emph{few-shot examples} that serve as few-shot demonstrations to help the model understand the task definition and output pattern. The first example presents a typical unpivot scenario, clarifying the task semantics and indicating the correct output format, while the second specifies the edge case where no attribute should be unpivoted. We empirically found that these examples are sufficient for conveying the task intent, i.e., to propose a reasonably inclusive set of potentially unpivotable attributes, narrowing and structuring the search space for later refinement. The challenge of identifying the unpivot attribute set is resolved during the subsequent \textbf{expansion} phase rather than through additional examples at initialization.
In the last section, we provide the input information with sharp symbols to make the prompt clearer.

\textbf{Selection.} The selection phase is the beginning of each iteration after initialization. Starting from the root node, this phase traverses the search tree by visiting child nodes until reaching a node not \textit{fully expanded}. Since each child node represents a refinement of the current unpivot attribute set, a node actually has infinite expansion space.
Therefore, we define a node as fully expanded when it reaches a maximum number of child nodes, where the maximum number is a user-defined hyper-parameter.
When the selection is currently performed at node $v_i$, each child node $v_j \in \mathcal{C}(v_i)$ is assigned a UCT~\cite{kocsis2006bandit} score:
\begin{equation}
    \text{UCT}_{v_j} = Q_{v_j} + C \sqrt{\frac{\ln N_{v_i}}{N_{v_j} + \epsilon}} \label{eq:uct}
\end{equation}
where $Q_{v_j}$ represents the cumulative reward of node $v_j$, $N_{v_i}$ represents the visit count of $v_i$, $N_{v_j}$ represents the visit count of $v_j$, $C$ is a hyperparameter to balance exploitation and exploration, $\epsilon$ is a small constant preventing Equation \eqref{eq:uct} from dividing zero. The child node with the greatest UCT score is selected.


\textbf{Expansion.} This phase generates child nodes for the selected node, similar to the expansion phase in traditional MCTS. The difference is that it does not randomly generate a child node in the possible search space, but follows a combined expansion strategy. With a probability of $1 - \varepsilon$, it asks LLM to optimize the current unpivot attribute set, and with a probability of $\varepsilon$ it performs a radius-$1$ random modification. This strategy avoids inefficient blind exploration by leveraging semantic guidance from LLM while still preserving sufficient randomness for convergence. For a node $v$ to be expanded, we have
\begin{empheq}[left={\eqmakebox[left][r]{$fb=$$\phantom{\empheqlbrace}$}}]{equation}
    \eqmakebox[right][l]{$\mathcal{M}(p_\text{fb} \| T_l \| T_r \| v)$} \quad \eqmakebox[cond][l]{} \label{eq:feedback}
\end{empheq}
\begin{empheq}[left={\eqmakebox[left][r]{$v'=$}}\empheqlbrace]{alignat=2}
    &\eqmakebox[right][l]{$\mathcal{M}(p_\text{init} \| T_l \| T_r \| v \| p_\text{refine} \| fb)$} &\quad& \eqmakebox[cond][r]{$\text{w.p. } 1 - \varepsilon$} \label{eq:refine} \\
    &\eqmakebox[right][l]{$\mathcal{N}(v)$} &\quad& \eqmakebox[cond][l]{$\text{w.p. } \varepsilon$}
\end{empheq}
where $p_\text{init}$ refers to the prompt that guides model $\mathcal{M}$ to generate the initial unpivot attribute set, $p_\text{fb}$ refers to the prompt that guides $\mathcal{M}$ to generate the feedback $fb$ for $v$, $p_\text{refine}$ refers to the prompt that guides $\mathcal{M}$ to generate optimized output $v'$, and $\mathcal{N}(v)$ refers to a random neighbor obtained by a radius-$1$ modification of the candidate unpivot attribute set of node $v$. Although the use of iterative LLM querying here increases the time cost compared to traditional schema matching approaches, this cost remains acceptable in real-world schema matching scenarios. Industrial studies~\cite{iovine2025effective} have reported that Amazon's large-scale schema matching systems, which also rely on LLM reasoning for attribute alignment, typically operate on an hour-level timescale while still achieving substantial efficiency gains, reducing human review time by more than 90\%.


Equation \eqref{eq:feedback} represents the process of model $\mathcal{M}$ evaluating the current unpivot attribute set and offering feedback.
We design the following three \emph{calibration hints} in the prompt for this query to calibrate some biases we found empirically:

\begin{enumerate}[leftmargin=*]
    \item The evaluation should focus on the transformation between the input tables using the current unpivot attribute set. Sometimes LLM may judge the design of input tables which is not what we expect here, so we use this hint to calibrate it.
    \item The optimization task should focus on reducing the size of the current unpivot attribute set. This hint is designed to coordinate with $p_\text{init}$ in order to build a unidirectional reasoning path, as we mentioned in the \textbf{initialization} phase.
    \item All attributes mentioned in the feedback should be selected from the attribute sets $\mathcal{A}_l$ and $\mathcal{A}_r$. This hint is used to prevent $\mathcal{M}$ from generating feedback that contains attributes that do not exist and leads to an illegal unpivot attribute set.
\end{enumerate}


Equation \eqref{eq:refine} represents the process of $\mathcal{M}$ refining the current unpivot attribute set based on the feedback generated from Equation \eqref{eq:feedback}.
The prompt template for this step produces a multi-turn conversation by sequentially incorporating refinement requirements and feedback after the initialization prompt and the LLM’s response.
Based on the bias we discovered in experiments that even if the feedback indicates the selection is ideal, LLM may still modify the unpivot attribute set according to the analysis procedure in the feedback, we calibrate LLM with the \emph{calibration hint} that it can leave the unpivot attribute set unchanged under this circumstance.

After the new candidate unpivot attribute set is generated, we can query LLM for a pair of corresponding attribute names $A_\text{var}$ and $A_\text{value}$ derived from the unpivot attributes to obtain an unpivot operator, and apply the operator to the pivot table $T_l$  to get the unpivoted table $T'_l$. Evaluation of this operator is performed in the schema matching component, for which we provide a detailed illustration in Section~\ref{sec:sm}.




\subsubsection{Schema Matching}
\label{sec:sm}
The schema matching component acts as the \emph{judger}. It aligns the unpivoted table $T'_l$ with $T_r$ and evaluates the reward of the unpivot operation. This component operates in two phases: \textit{Similarity Calculation} and \textit{Matching Generation}.

\textbf{Similarity Calculation.} During this phase, we compute the similarity score between each pair of attributes and obtain a similarity matrix between $T'_l$ and $T_r$. To comprehensively measure attribute correspondence, we adopt a \emph{multi-dimensional evaluation metric} that integrates three complementary dimensions: (i) a lexical signal using \emph{Levenshtein distance}~\cite{levenshtein1966binary} on attribute names, (ii) a semantic signal using the \emph{cosine similarity of embeddings}~\cite{gomaa2013survey} of attribute names, and (iii) a distributional signal based on \emph{Jensen--Shannon (JS) divergence}~\cite{lin2002divergence} between the value distributions of the two attributes. Each dimension outputs a similarity score in $[0,1]$.

For each attribute pair, all three similarity scores are computed when both attributes are integer-valued, since the JS divergence applies solely to numerical distributions. Otherwise, only lexical and semantic similarities are used for computation.
The similarity of the value distribution with limited precision from sampled data records acts as complementary evidence for the metric to evaluate the plausibility of unpivot attributes $\mathcal{A}_\text{unpivot}$. This design complements the LLM-based unpivot identification in the schema flattening component, enabling \name~ to achieve a balanced selection that aligns both semantic consistency and data distribution. In order to combine these scores into a comprehensive similarity score, we apply a combiner that takes these scores as input and outputs a similarity score between 0 and 1. In our current implementation, we use a simple average combiner based on the experiments (detailed experimental results can be found in Appendix~\ref{app:similarity_metrics}). 

\textbf{Matching Generation.} A similarity matrix $M$ is formed after calculating all pairs of attributes between $T'_l$ and $T_r$. To get a schema matching and a quantized evaluation, we apply maximum weighted bipartite matching to the matrix with a modified Jonker-Volgenant algorithm without initialization~\cite{crouse2016implementing}. The maximum reward and the corresponding matching are sent back to the schema flattening component and guide the subsequent iterations.

\section{Experiments}
\label{sec:experiments}

In this section, we conduct extensive experiments to evaluate the effectiveness of \name~ using our benchmark datasets.

\subsection{Experimental Settings}
\label{sec:exp_settings}

~

\textbf{Datasets.}
To the best of our knowledge, there is no widely-acknowledged benchmark dataset for assessing schema matching over pivoted tables.
To study the performance of \name~ in real-world scenarios, we propose a new benchmark named \textsc{PTbench}, which contains four datasets, using real cases from two categories: (i) online user forums and (ii) spreadsheet-tables from real-life ETL processes~\cite{li2023auto}.
Table \ref{table:statistics} shows the statistics of \textsc{PTbench}.

\underline{Adult.}
This dataset is extracted from the 1994 Census Bureau database~\cite{censusbureau}.
It contains data relevant to demographic information and economic conditions. Both the pivot table $T_l$ and standard table $T_r$ contain 19 attributes, and the unpivot attribute set contains 2 attributes. 19 pairs of attributes are matched between the two tables.

\underline{Football.}
This is the Premier League dataset published in Football-data~\cite{football-data}.
It contains data such as team information, goals and shots, etc. The pivot table consists of 23 attributes, among which 6 form the unpivot attribute set, while the standard table contains 13 attributes. There are 13 attribute matches between the two tables.

\underline{President.}
This dataset is extracted from real-life ETL processes. It contains evaluation data from various perspectives on presidents of the USA including war record, economic approval rate and so on. The pivot table contains 12 attributes and the standard table contains 4 attributes. The unpivot attribute set consists of 5 attributes, and the ground truth contains 4 pairs of matches.


\underline{Gene.}
This dataset is derived from the GTEx v11 public sample annotation table~\cite{GTExPortal}. It contains rich metadata for human tissue samples such as tissue type, detailed tissue subtype and RNA quality metrics. The pivot table contains 119 attributes and the standard table contains 96 attributes. The unpivot attribute set consists of 25 attributes, and the ground truth contains 96 pairs of matches.

All four tables have been stratified sampled to balance the requirements between user privacy and data distribution. In addition, we apply anonymization to string attributes to protect user privacy. 

\begin{table}[t]\small
    \centering
    \caption{Statistics of \textsc{PTbench} used in experiments}
    \label{table:statistics}
    \begin{tabularx}{\linewidth}{cc|>{\centering\arraybackslash}X|>{\centering\arraybackslash}X}
        \toprule
        \multicolumn{2}{c|}{Dataset} & \makecell[c]{\# Total Entities} & \# Attributes\\ \hline
        \multirow{2}{*}{Adult} & pivot table & 32,561 & 19 \\
        ~ & standard table & 65,122 & 19 \\ \hline
        \multirow{2}{*}{Football} & pivot table & 380 & 23 \\
        ~ & standard table & 2,280 & 13 \\ \hline
        \multirow{2}{*}{President} & pivot table & 43 & 12 \\
        ~ & standard table & 215 & 4 \\ \hline
        \multirow{2}{*}{Gene} & pivot table & 48,231 & 119 \\
        ~ & standard table & 1,205,775 & 96 \\
        \bottomrule
    \end{tabularx}
    \vspace{-4mm}
\end{table}


\vspace{0.2em}
\textbf{Comparative Approaches.}
We compare \name~ against the following representative approaches: COMA 3.0~\cite{massmann2011evolution}, DisB~\cite{zhang2011automatic}, GRAM~\cite{liu2024gram} and NaiveP (a naive pipeline that performs unpivot attribute identification and schema matching independently without iterative refinement). See Appendix~\ref{app:baseline_details} for more details.

\textbf{Evaluation Metrics.}
The end-to-end accuracy is the traditional evaluation metric for the schema matching approaches. Since our datasets contain unpivot attributes, this metric cannot evaluate the performance of the approaches, so we additionally adopt per-attribute accuracy as a second metric. Specifically, we use the following two metrics:

\begin{itemize}[leftmargin=*]
    \item \textbf{\textit{End-to-End Accuracy} ($\boldsymbol{Acc}_\text{E2E}$).} $Acc_\text{E2E}$ represents the ratio of correctly predicted matches to ground truth matches, evaluating the performance of approaches end-to-end. For the attributes generated by the unpivot operation, a match is counted as correct only if the approach correctly identifies the unpivot attribute set and matches the generated attributes with the correct target.
    \item \textbf{\textit{Per-Attribute Accuracy} ($\boldsymbol{Acc}_\text{per\_attr.}$).} $Acc_\text{per\_attr.}$ captures the ratio of \textit{correctly operated} attributes to all attributes. For attributes in the unpivot attribute set, correctly operated means successfully recognizing them as the attributes to be unpivoted, while for other attributes, it means matching them to the correct target.
\end{itemize}

Formally, for input attribute sets $\mathcal{A}_l$ and $\mathcal{A}_r$, the ground truth unpivot attribute set is $\mathcal{A}_\text{unpivot}$ and the ground truth matching is $\pi$. If the approach identifies an unpivot attribute set $\mathcal{A}'_\text{unpivot}$, transform $\mathcal{A}_l$ into $\mathcal{A}'_l$, and generate a matching $\pi'$, we have:
\allowdisplaybreaks
\begin{align}
    & Acc_\text{E2E} = \frac{\bigg| \left\{ A \ | \ A \in \mathcal{A}'_l \land \pi(A) = \pi'(A) \neq \text{Null} \right\}\bigg|}{\bigg| \left\{ A \ | \ A \in \mathcal{A}'_l \land \pi(A) \neq \text{Null} \right\}\bigg|} \\
    & Acc_\text{per\_attr.} = \frac{\lvert \mathcal{A}_\text{correct} \rvert}{\lvert \mathcal{A}_l \rvert + \lvert \mathcal{A}_r \rvert}
\end{align}
where
\begin{equation}
\begin{aligned}
    \mathcal{A}_\text{correct} = & \left\{ A \ | \ A \in (\mathcal{A}_l - \mathcal{A}_\text{unpivot}) \land \pi(A) = \pi'(A)\right\} \\
    & \cup \left\{ A \ | \ A \in \mathcal{A}_r \land \exists A' \in \mathcal{A}'_l, \pi(A') = \pi'(A') = A \right\} \\
    & \cup (\mathcal{A}_\text{unpivot} \cap \mathcal{A}'_\text{unpivot})
\end{aligned}
\end{equation}

\subsection{Overall Performance}

\begin{table*}[t]\small
    \centering
    \caption{$Acc_\text{E2E}$(\%) and $Acc_\text{per\_attr.}$(\%) of various approaches on different datasets}
    \vspace{-3mm}
    \label{table:baseline}
    \renewcommand\arraystretch{1.1}
    \begin{tabularx}{\linewidth}{c*{5}{|>{\centering\arraybackslash}X>{\centering\arraybackslash}X}}
        \toprule
        \multirow{2}*{\textbf{Methods}} & \multicolumn{2}{c|}{\textbf{Adult}} & \multicolumn{2}{c|}{\textbf{Football}} & \multicolumn{2}{c|}{\textbf{President}} & \multicolumn{2}{c|}{\textbf{Gene}} & \multicolumn{2}{c}{\textbf{Average}} \\ \cline{2-11}
        ~ & $Acc_\text{E2E}$ & $Acc_\text{per\_attr.}$ & $Acc_\text{E2E}$ & $Acc_\text{per\_attr.}$ & $Acc_\text{E2E}$ & $Acc_\text{per\_attr.}$ & $Acc_\text{E2E}$ & $Acc_\text{per\_attr.}$ & $Acc_\text{E2E}$ & $Acc_\text{per\_attr.}$ \\ \hline
        \multicolumn{11}{c}{All attributes} \\ \hline
        COMA 3.0 & 78.94 & 78.94 & 84.62 & 77.78 & 50.00 & 50.00 & \underline{97.92} & \underline{87.44} & \underline{77.87} & 73.54 \\
        DisB & 56.84 & 56.83 & 35.38 & 37.78 & 30.00 & 47.14 & 12.50 & 11.16 & 33.68 & 38.23 \\
        GRAM & 78.94 & 78.94 & 86.42 & 77.78 & 50.00 & 56.25 & 93.75 & 83.72 & 76.83 & 74.17 \\
        
        NaiveP & \underline{100.00} & \underline{100.00} & \underline{84.62} & \underline{83.33} & \underline{50.00} & \underline{68.75} & 69.58 & 73.86 & 76.05 & \underline{81.49} \\
        \hline
        \textbf{\name~} & \textbf{100.00} & \textbf{100.00} & \textbf{93.85} & \textbf{96.11} & \textbf{60.00} & \textbf{88.75} & \textbf{97.92} & \textbf{92.93} & \textbf{87.94} & \textbf{94.45} \\ \hline
        \multicolumn{11}{c}{Without unpivot attributes} \\ \hline
        COMA 3.0 & 88.23 & 88.23 & 100.00 & 100.00 & 100.00 & 100.00 & \underline{100.00} & \underline{100.00} & \underline{97.06} & \underline{97.06} \\
        DisB & 70.59 & 70.59 & 36.36 & 36.36 & 100.00 & 100.00 & 12.77 & 12.77 & 54.93 & 54.93 \\
        GRAM & 88.23 & 88.23 & 100.00 & 100.00 & 100.00 & 100.00 & 95.74 & 95.74 & 95.99 & 95.99 \\
        NaiveP & \underline{100.00} & \underline{100.00} & \underline{100.00} & \underline{100.00} & \underline{100.00} & \underline{100.00} & 71.06 & 71.06 & 92.77 & 92.77 \\
        \hline
        \textbf{\name~} & \textbf{100.00} & \textbf{100.00} & \textbf{100.00} & \textbf{100.00} & \textbf{100.00} & \textbf{100.00} & \textbf{100.00} & \textbf{100.00} & \textbf{100.00} & \textbf{100.00} \\
        \bottomrule
    \end{tabularx}
    \vspace{-5mm}
\end{table*}

We first conduct a comprehensive comparison of various schema matching methods.
Table \ref{table:baseline} reports the overall performance. This table consists of two parts. 
The first part shows the performance of approaches considering all attributes. 
Since all the competitors are not designed for pivot tables, \name~ significantly outperforms all baselines on all datasets.
Compared to the strongest baseline (NaiveP), it achieves improvements of 15.63\% and 15.90\% in $Acc_\text{E2E}$ and $Acc_\text{per\_attr.}$, respectively. The noticeably lower performance of NaiveP demonstrates that the absence of iterative feedback limits matching effectiveness, whereas \name~ benefits from refinement and cross-component interaction. 
In particular, we observe that \name~ achieves $100\%$ accuracy on both the Adult dataset, while the performance on the President dataset is relatively low. This is because the difficulty in the four datasets lies in accurately identifying the unpivot attribute set.
For Adult, the unpivot attributes are structurally explicit and semantically distinguishable from the rest of the schema, making the optimal unpivot subset almost uniquely identifiable. In contrast, the President dataset contains $10$ similar president evaluation metrics, and semantically many subsets of these $10$ attributes are plausible candidates. This ambiguity results in vast exploration of suboptimal branches, making it harder to reach the ground-truth unpivot attribute set.
The second part shows the accuracy of attributes without unpivot-related ones, i.e., attributes to be unpivoted in the source table and the attributes to be generated in the target table. 
Since the core challenge of the datasets lies in identifying the correct unpivot attributes, once the unpivot attributes are correctly removed, most approaches can achieve strong matching accuracy. In contrast, the DisB method performs the worst because it relies solely on statistical distributions of values, while our datasets contain only stratified sampled records, making such distribution signals unreliable.
\name~ again performs the best among all competitors.

\subsection{Effect of Iterations}
\label{sec:exp_iteration}


\begin{figure}[t]
    \begin{minipage}[h]{0.49\linewidth}
        \centering
        \includegraphics[width=\linewidth]{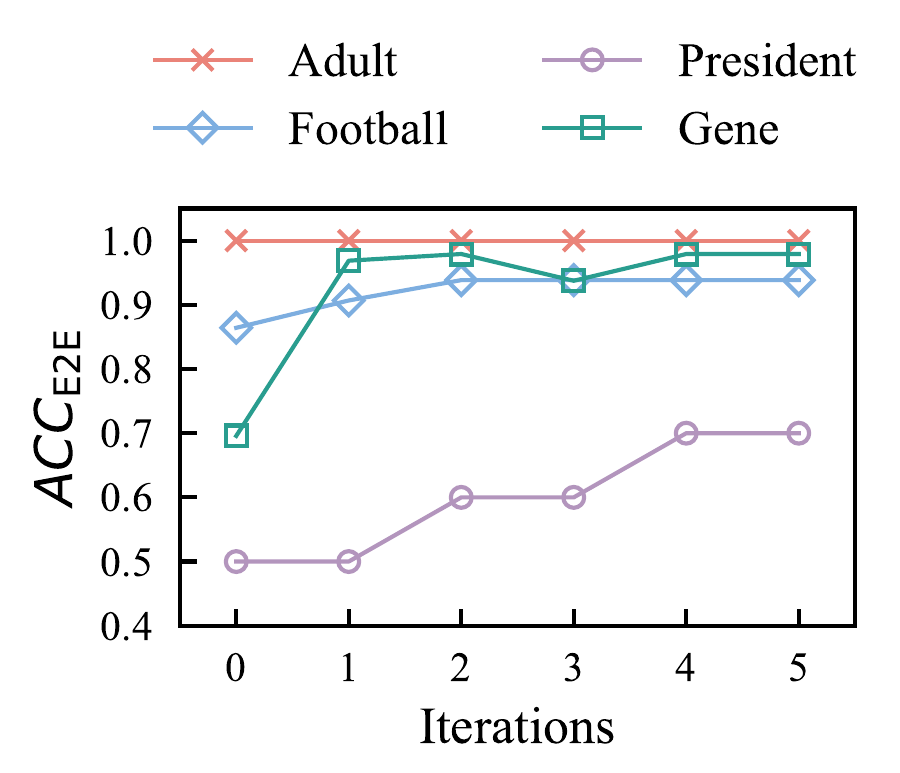}
        \vspace{-6mm}
        \subcaption{$Acc_\text{E2E}$}
    \end{minipage}
    \begin{minipage}[h]{0.49\linewidth}
        \centering
        \includegraphics[width=\linewidth]{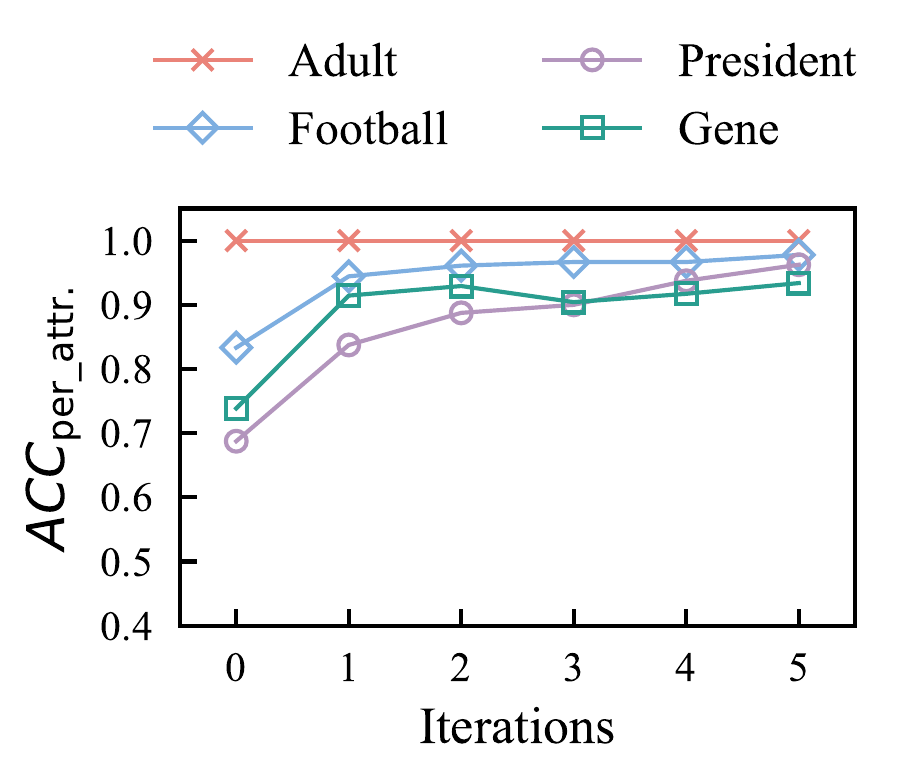}
        \vspace{-6mm}
        \subcaption{$Acc_\text{per\_attr.}$}
    \end{minipage} \\
    \vspace{-3mm}
    \caption{Performance of different iteration times}
    \label{fig:iterations}
    \vspace{-5mm}
\end{figure}

In this section, we investigate the impact of iteration times in \name~. We include iteration 0 as a baseline case where no iterative refinement is performed. In this setting, the pipeline degenerates into the NaiveP baseline, with the two components executing independently without interaction. We explore the accuracy with iterations from 0 to 5, and the results are plotted in Figure~\ref{fig:iterations}. For the Adult dataset, the accuracy remains $100\%$ from iteration 0 to iteration 5 since this dataset does not have many challenges in unpivot identification; for the other three datasets, the plot shows a significant increase in accuracy for iterations from 0 to 2; however, for iterations from 3 to 5, this increase diminishes quickly. The low $Acc_\text{E2E}$ and $Acc_\text{per\_attr.}$ observed at iteration 0 validate the \textbf{Challenge II} discussed in Section~\ref{sec:intro}, demonstrating that treating unpivot identification and schema matching as isolated processes yields suboptimal performance. The noticeable increase in accuracy across iterations from 0 to 2 indicates the effectiveness of Self-Refine for the unpivot attribute set identification task, but the markedly slowed and even stabilized improvement after iteration 3 shows that Self-Refine's capability has an upper bound; the LLM cannot infinitely improve its answers. As additional iterations incur higher computational and interaction costs (especially with large models) while offering only marginal accuracy gains, we set the number of iterations in \name~ to $2$ by default to balance the efficiency and accuracy. Two iterations already provide sufficient interaction between the two components to achieve a satisfying performance. 

\subsection{Effect of Stochastic Probability $\varepsilon$}
\label{sec:exp_epsilon}

In this section, we evaluate the effect of different probability $\varepsilon$ for random radius-$1$ modification in the expansion phase. We explore the accuracy for $\varepsilon \in \left\{ 0.05, 0.2, 0.4, 0.6, 0.8, 1.0\right\}$, and the results are plotted in Figure~\ref{fig:epsilon}. Since the Adult dataset is not challenging, and the initialization phase at iteration 0 can already reach $100\%$ accuracy (as verified in Section~\ref{sec:exp_iteration}), we focus on the performance on the other three datasets, i.e., Football, President and Gene. Figure~\ref{fig:epsilon} presents the different iteration-accuracy curves under different $\varepsilon$ values. Overall, the accuracy increases as the iterations increase from a broad perspective for all settings. When comparing across different $\varepsilon$ values, smaller $\varepsilon$ leads to a faster and more stable accuracy improvement, whereas larger $\varepsilon$ results in slower convergence and more frequent local fluctuations. This effect is more pronounced on the large-scale Gene dataset, where we observe more outliers in the curves, because its much larger search space makes the $\varepsilon$-induced random exploration markedly less effective under a limited number of iterations. This demonstrates that LLM-guided refinement effectively steers the search toward correct unpivot operators and accelerates convergence.

\begin{figure}[t]
    \begin{minipage}[h]{0.49\linewidth}
        \centering
        \includegraphics[width=\linewidth]{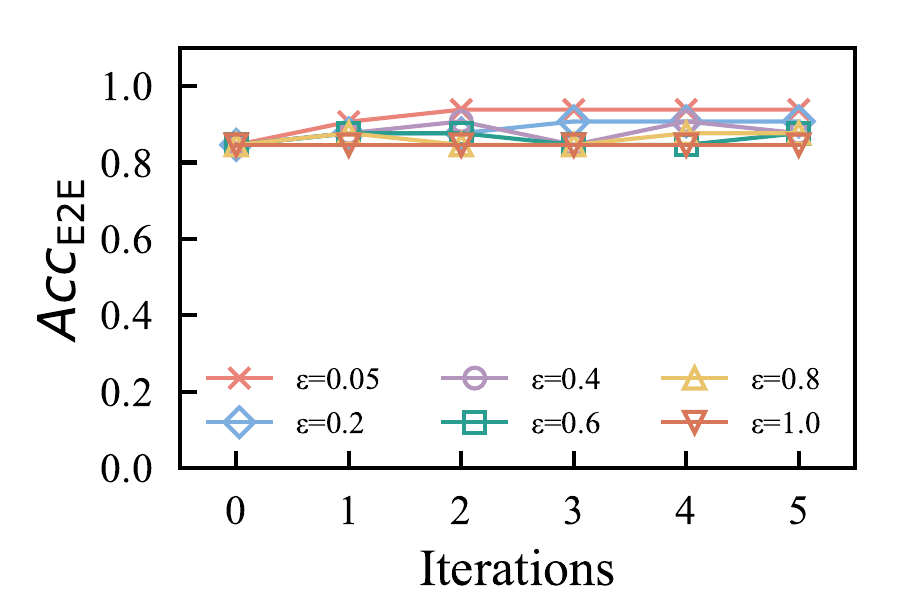}
        \vspace{-6mm}
        \subcaption{$Acc_\text{E2E}$ on Football}
    \end{minipage}
    \begin{minipage}[h]{0.49\linewidth}
        \centering
        \includegraphics[width=\linewidth]{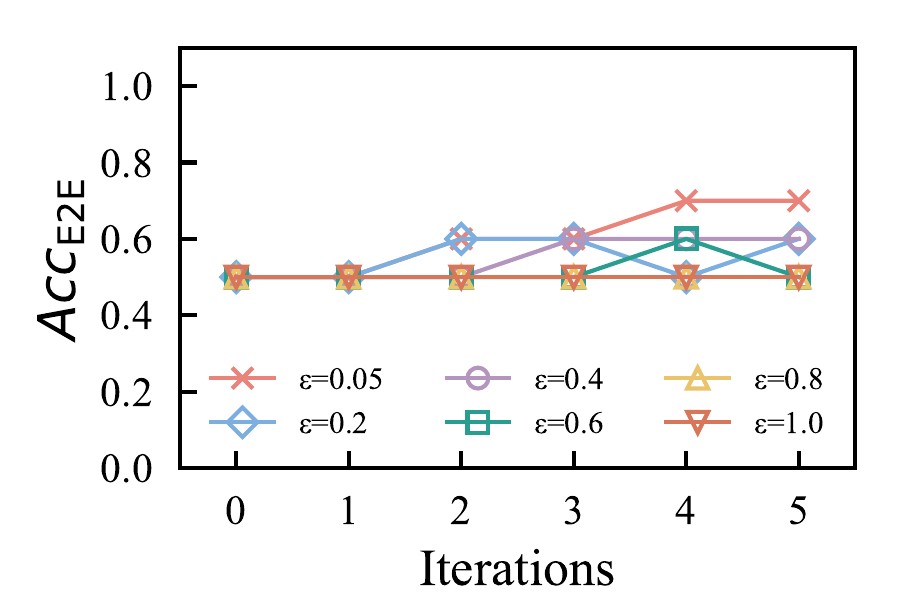}
        \vspace{-6mm}
        \subcaption{$Acc_\text{E2E}$ on President}
    \end{minipage} \\
    \begin{minipage}[h]{0.49\linewidth}
        \centering
        \includegraphics[width=\linewidth]{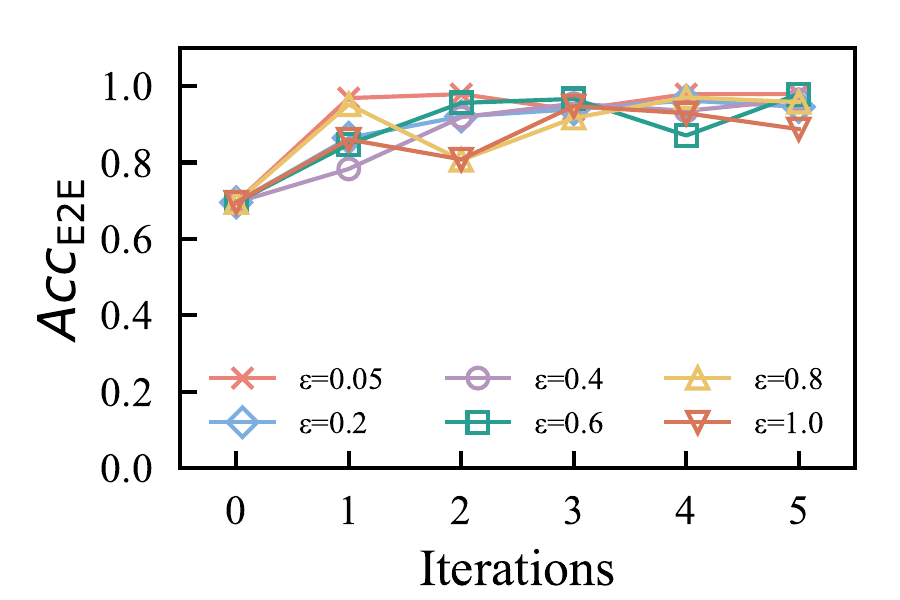}
        \vspace{-6mm}
        \subcaption{$Acc_\text{E2E}$ on Gene}
    \end{minipage}
    \begin{minipage}[h]{0.49\linewidth}
        \centering
        \includegraphics[width=\linewidth]{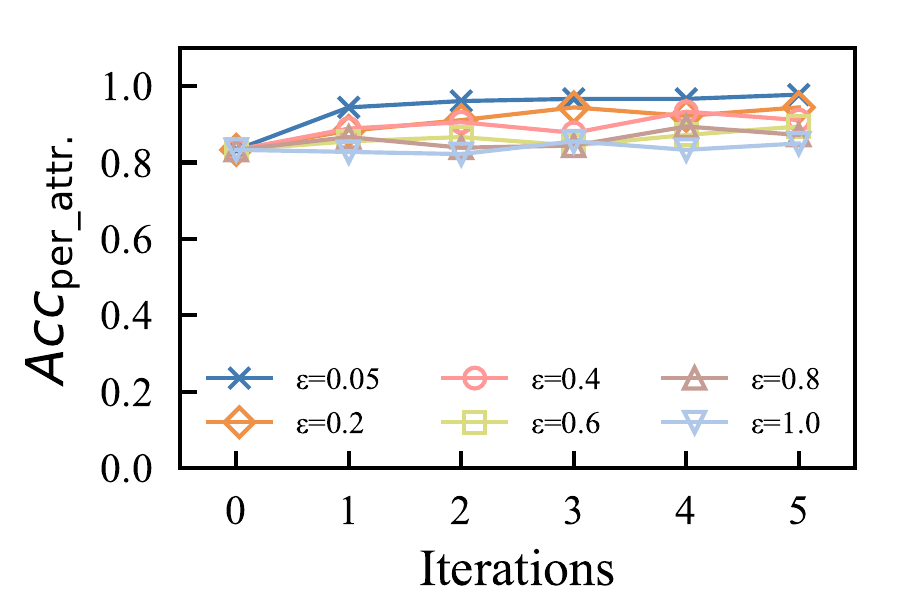}
        \vspace{-6mm}
        \subcaption{$Acc_\text{per\_attr.}$ on Football}
    \end{minipage} \\
    \begin{minipage}[h]{0.49\linewidth}
        \centering
        \includegraphics[width=\linewidth]{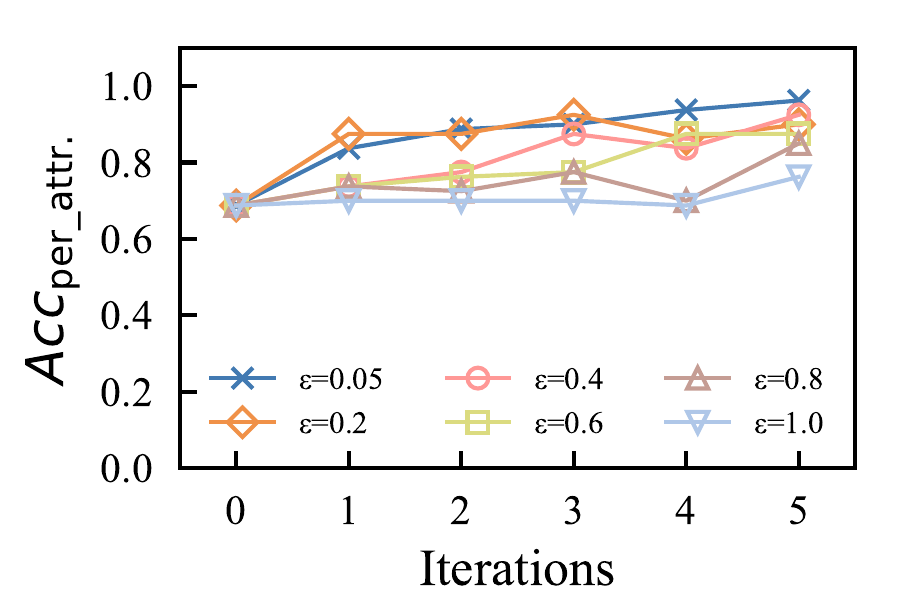}
        \vspace{-6mm}
        \subcaption{$Acc_\text{per\_attr.}$ on President}
    \end{minipage}
    \begin{minipage}[h]{0.49\linewidth}
        \centering
        \includegraphics[width=\linewidth]{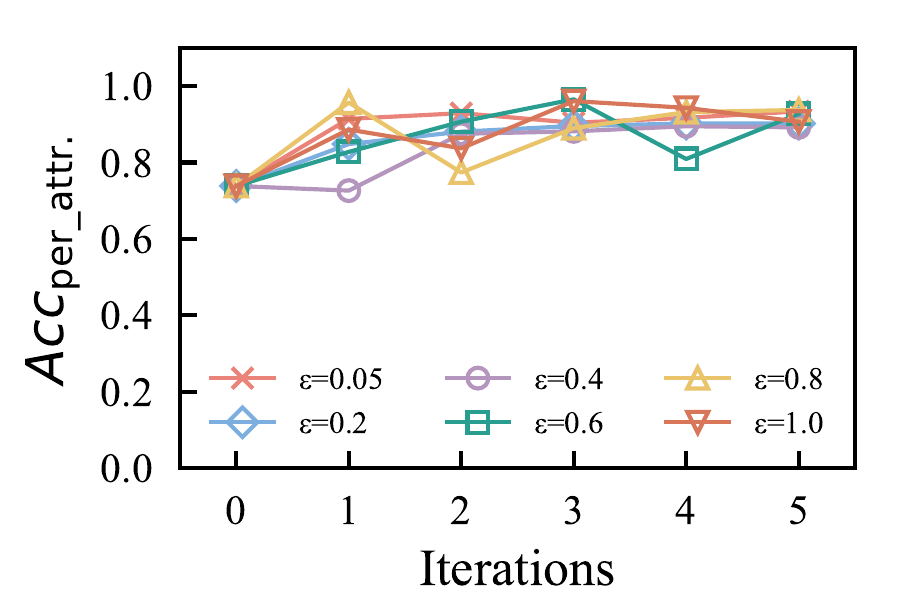}
        \vspace{-6mm}
        \subcaption{$Acc_\text{per\_attr.}$ on Gene}
    \end{minipage}
    \vspace{-3mm}
    \caption{Performance of different epsilon}
    \label{fig:epsilon}
    \vspace{-2mm}
\end{figure}

\subsection{Ablation Study on Root Node Generation}
In this section, we conduct an ablation study on the generation of the root node. We compared the accuracy of different generation methods of the unpivot attribute set for the root node or the Monte-Carlo search tree on Qwen3 models of different sizes. Our study shows that querying LLM for an initial set outperforms both simply selecting all the source attributes and selecting random attributes on the $235$B version Qwen3 model. More details of the evaluation results are presented in Appendix~\ref{app:ablation_study}.

\subsection{Scalability of \name~}

\begin{figure}[t]
    \centering
    \includegraphics[width=\linewidth]{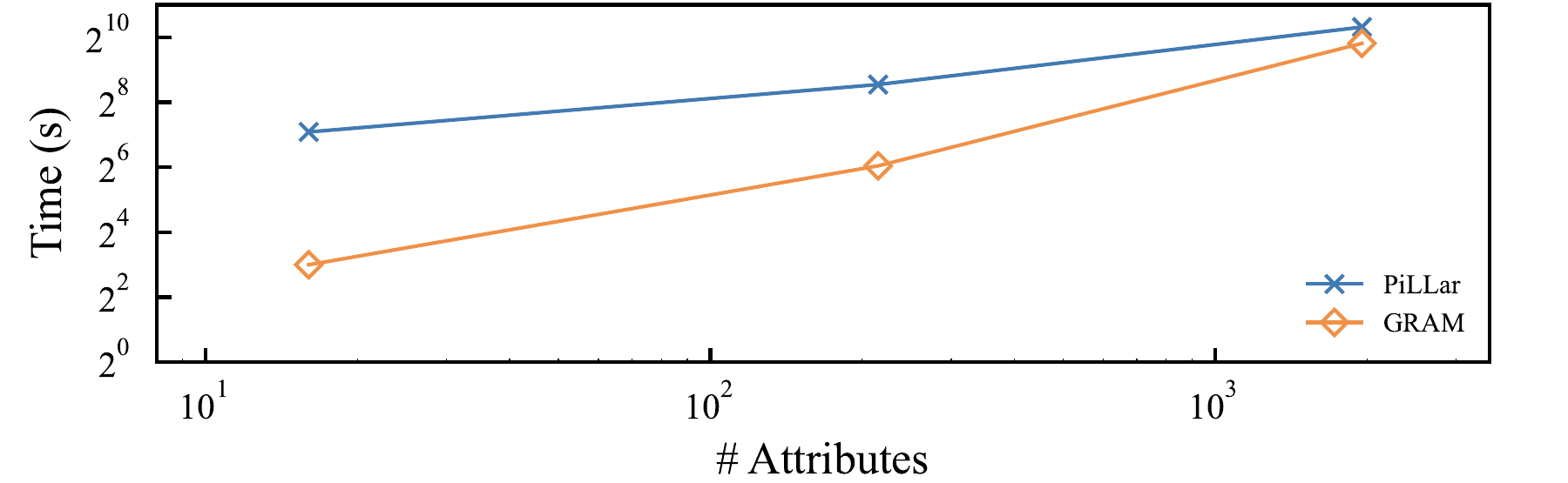}
    \vspace{-7mm}
    \caption{Runtime scalability w.r.t. the number of attributes}
    \label{fig:runtime_scalability}
    \vspace{-5mm}
\end{figure}

In this section, we evaluate the runtime scalability of \name~ on different numbers of attributes across three orders of magnitude ($10^1$, $10^2$ and $10^3$). To evaluate the $10^3$-scale runtime, we additionally use the M5 Forecasting (Walmart sales) dataset from Kaggle~\footnote{\url{https://www.kaggle.com/competitions/m5-forecasting-accuracy}}, which contains around $2000$ time-series columns.
As mentioned in \textbf{Challenge II} discussed in Section~\ref{sec:intro}, matching for pivot table schema inherently requires LLM reasoning. Therefore, we compare runtime scalability against a representative LLM-based baseline, GRAM.
As shown in Figure~\ref{fig:runtime_scalability}, \name~ exhibits a noticeably flatter growth trend than GRAM as the number of attributes increases, indicating better scalability for large schemas. This is because GRAM issues one LLM call per attribute, so its total number of calls grows linearly with the number of attributes; with the increase in per-call latency as the prompt becomes longer on larger schemas, the runtime grows worse than linear in practice. In contrast, our method keeps the number of LLM calls constant, so the overhead comes only from the modest increase in prompt length within a fixed number of calls, yielding a much more gradual runtime increase as the schema scales. As discussed in Section~\ref{sec:realization}, we consider this runtime as acceptable given the significant performance gains.
\section{Related Work}
\label{sec:related_work}

\noindent \textbf{\textit{Rule-based approaches.}} Rule-based schema matching relies on manually designed rules to measure similarity between source and target attributes. Representative systems~\cite{palopoli2000system, madhavan2001generic, doan2000learning, zhang2025smutf} combine lexical, structural, and distributional heuristics, usually with fixed or lightweight aggregation. COMA~\cite{do2002coma} integrates multiple handcrafted matchers via predefined combine strategies (e.g., average/max/min-threshold). DistributionBased~\cite{zhang2011automatic} leverages distribution similarity and intersection signals to group semantically related columns into shared attributes. While these methods are efficient and interpretable, they often miss deeper semantic correspondences, and designing robust rules remains non-trivial.

\noindent \textbf{\textit{Deep-learning-based approaches.}} Deep-learning methods (semi-)automatically discover schema matches using neural models. Many works~\cite{zhang2021smat, zhang2023schema, wu2023conschema, tu2023unicorn} encode attribute names (and optionally metadata) with pretrained language models, then compute similarities via neural modules or self-training. ADnEV~\cite{shraga2020adnev} instead refines a similarity matrix produced by conventional matchers using two neural networks. Compared to rule-based methods, these approaches better capture semantics and can adapt with labeled data or feedback, but they typically require substantial training data; models trained on one domain may also generalize poorly to new domains.

\noindent \textbf{\textit{LLM-based approaches.}} Recent work leverages LLMs for schema matching. GRAM~\cite{liu2024gram} combines named-entity-resolution (NER) and retrieval-augmented generation (RAG)~\cite{lewis2020retrieval} to generate privacy-aware prompts and improve efficiency and accuracy. Other systems~\cite{seedat2024matchmaker, liu2025magneto, parciak2150schema} use LLMs for reranking, synthetic in-context learning, or hybrid pipelines, largely treating LLMs as semantic reasoners or validators rather than redesigning the workflow. These approaches offer strong semantic generalization and can work under privacy constraints, but still struggle with ambiguous domain abbreviations and cases requiring knowledge beyond semantics.

\section{Conclusions}
\label{sec:conclusions}

In this paper, we study the joint schema-value matching problem between pivot tables and standard relational tables under the setting where only a minimum of data records can be accessed due to privacy concerns. We present \name~, an LLM-driven matching for pivot table schema method that relies on our proposed MCTS-based search paradigm. In \name~, we divide the matching generation process into two stages, namely schema flattening and schema matching, which are iteratively executed and mutually adjust each other. Schema flattening is a component that identifies the unpivot operator for the input pivot table, and schema matching is a component that generates matches and evaluates the reward. Our experiments shows the superiority of \name~. As for future works, a promising direction is to handle more complex mapping transformations for tables in the wild.

\bibliographystyle{ACM-Reference-Format}
\bibliography{refer}

\appendix

\section{Supplementary Case Study Result}
\label{app:case_study}

This appendix provides supplementary qualitative outputs for the running example in Figure~\ref{fig:llm_responses}, together with the web-chat prompt transcript used to obtain them (we omit intermediate assistant acknowledgements for brevity.).

\small
\begin{framed}
\noindent [Turn 1] user: \newline
\noindent You are now an expert in data governance, first I'll give you a definition, a requirement and some examples, and I need you to remember them for the following request. \newline

\noindent [Turn 2] user: \newline
\noindent Definition: \newline
\noindent Unpivot: Transforming multiple horizontally arranged numeric columns into vertical attribute-value pairs, preserving identifier columns, where original column names become values in a new attribute column and their corresponding data is consolidated into a unified value column. \newline
\noindent Requirement: \newline
\noindent Your task is to detect the attributes that can be unpivoted in the source table. A source table and a target table for reference will be provided. Your answer should be in JSON format and no explanation is needed. For example, if the attributes to be unpivoted is $\left[ \text{A}, \text{B}, \text{C}, \text{D} \right]$, your answer should be $\{ ``\text{unpivot}\_\text{columns}": \left[ ``\text{A}", ``\text{B}", ``\text{C}", ``\text{D}" \right] \}$. If no attribute is in the unpivot subset, answer with an empty unpivot\_columns array, that is, $\{ ``\text{unpivot}\_\text{columns}": \left[ \right] \}$. And remember that your JSON string should be pure text, do not put it in a code block. \newline
\noindent Example: \newline
\noindent For input attributes $\left[ \text{Product}, \text{Jan}\_\text{Sales}, \text{Feb}\_\text{Sales} \right]$, the corresponding output attribute is $\left[ \text{Jan}\_\text{Sales}, \text{Feb}\_\text{Sales} \right]$, and the output answer should be $\{``\text{unpivot}\_\text{columns}": \left[ ``\text{Jan}\_\text{Sales}", ``\text{Feb}\_\text{Sales}" \right] \}$. \newline
\noindent Example: \newline
\noindent For input attributes $\left[ \text{Trade}, \text{Date}, \text{Quantity} \right]$, the corresponding output attribute is $\left[ \right]$, because there is no attribute to be unpivoted, and the output answer should be $\{``\text{unpivot}\_\text{columns}": \left[ \right] \}$. \newline

\noindent [Turn 3] user: \newline
\noindent \#\#\# Identify the columns that can be unpivoted in a list of column names and with no explanation. \newline
\noindent \#\#\# Source column names: \newline
\noindent \# \newline
\noindent \# Div \newline
\noindent \# Date \newline
\noindent \# HS \newline
\noindent \# AS \newline
\noindent \# HST \newline
\noindent \# AST \newline
\noindent \# \newline
\noindent \#\#\# Description: \newline
\noindent \# \newline
\noindent \# Div: League division abbreviation (e.g., `E0' for English Premier League) \newline
\noindent \# Date: Match date (format: DD/MM/YY) \newline
\noindent \# HS: Home Shots (total shots attempted by the home team) \newline
\noindent \# AS: Away Shots (total shots attempted by the away team) \newline
\noindent \# HST: Home Shots on Target (shots on goal by the home team) \newline
\noindent \# AST: Away Shots on Target (shots on goal by the away team) \newline
\noindent \# \newline
\noindent \#\#\# Target column names for reference: \newline
\noindent \# \newline
\noindent \# Div \newline
\noindent \# Date \newline
\noindent \# Metric \newline
\noindent \# Value
\end{framed}
\normalsize

\begin{figure}[t]
    \centering
    \includegraphics[width=\linewidth]{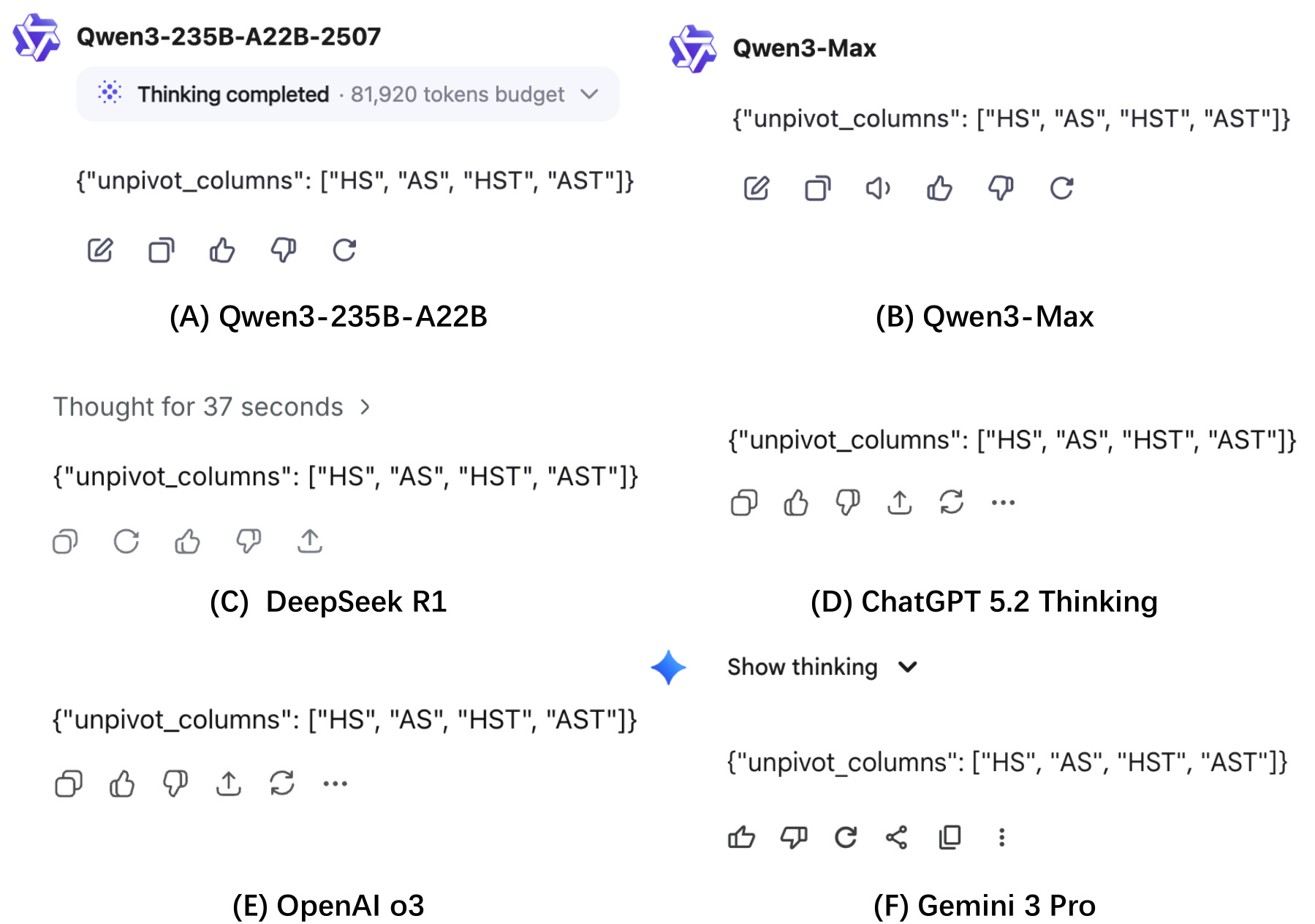}
    \vspace{-4mm}
    \caption{Unpivoted attributes identified by SOTA LLMs}
    \label{fig:llm_responses}
    \vspace{-4mm}
\end{figure}
\section{Detailed Proofs for Convergence Analysis}
\label{app:proofs}


\begin{assumption}[$\varepsilon$-Randomized Expansion]\label{asp:exp}
During expansion, a new child is generated either by an LLM-guided refinement with probability $1-\varepsilon$, or by a radius-$1$ random modification with probability $\varepsilon>0$. The random modification assigns a strictly positive probability to every yet-unexpanded neighbor candidate of the current node, and duplicate generations are forbidden. Consequently, every feasible candidate in the finite search space will eventually be generated with reachability~$1$.
\end{assumption}

\begin{lemma}[Exploration Completeness]\label{lem:exploration-complete}
Under Assumption~\ref{asp:exp}, the randomized expansion mechanism is probabilistically complete: every feasible candidate in the finite search space will eventually be generated with probability~$1$.
\end{lemma}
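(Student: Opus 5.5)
The plan is to reduce the statement to a finite chain of "growth events" for the set of generated candidates. Write $G_t \subseteq \mathcal{S}$ for the set of candidates represented by some node of the tree after iteration $t$. Then $G_t$ is nondecreasing and bounded by the finite set $\mathcal{S}$. It therefore suffices to show: whenever $G_t \neq \mathcal{S}$, with probability $1$ there is some later iteration $t' > t$ with $G_{t'} \supsetneq G_t$. Since $G_t$ can strictly grow at most $|\mathcal{S}|$ times, a finite union of almost-sure events then gives that $G_t = \mathcal{S}$ eventually, almost surely. This is exactly the claim of Lemma~\ref{lem:exploration-complete}.

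\emph{Step 1: connectivity of the radius-$1$ neighbourhood graph.} Put an edge between $s, s' \in \mathcal{S}$ when one single add, remove or swap of an attribute turns $s$ into $s'$. Any two subsets of the finite attribute universe are joined by a path of such moves: first remove the attributes of $s \setminus s'$ one at a time, then add those of $s' \setminus s$. The path has length at most the size of the universe, and every intermediate set is feasible. Consequently, if $\emptyset \neq G_t \subsetneq \mathcal{S}$, some $s \in G_t$ has a neighbour $s' \notin G_t$. This "frontier" candidate $s$ is represented by at least one node $v$ with $\tau(v)=s$.

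\emph{Step 2: the frontier node is expanded infinitely often, each time with a fresh coin.} Here I would use the UCT rule~\eqref{eq:uct}. Rewards, and hence $Q$-values, lie in $[0,\Omega(T_l,T_r)]$. The exploration bonus $C\sqrt{\ln N_{v_i}/(N_{v_j}+\epsilon)}$ diverges for any child whose visit count stays bounded while its parent's count grows. So every node on the root-to-$v$ path is selected infinitely often, and in particular $v$ is reached for expansion infinitely often. For the fully-expanded cap I would read Assumption~\ref{asp:exp} as saying that a node still having yet-unexpanded neighbour candidates remains eligible for random expansion. If that reading fails, I would instead apply the argument to a not-fully-expanded descendant representing a frontier candidate; one exists because random children realise new candidates.

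At each such expansion an independent $\varepsilon$-coin selects the random branch. By the second Borel--Cantelli lemma, infinitely many of these expansions at $v$ are random almost surely. At each random expansion, Assumption~\ref{asp:exp} gives positive probability to every yet-unexpanded neighbour. Because duplicates are forbidden, the generated set is a new candidate, so $G$ strictly grows. Even if "unexpanded neighbour" only refers to neighbours not previously generated from $v$, the finite neighbourhood is exhausted after finitely many random expansions, and $s'$ must then appear. Hence the waiting time for the next growth event is finite almost surely, which closes the reduction.

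I expect Step 2 to be the main obstacle. It needs the selection policy, not just the expansion rule, to keep returning to the frontier node $v$. This relies on the divergence of the UCT bonus and on how the paper's "fully expanded" cap interacts with random expansion. I would try the UCT divergence argument first and treat the cap as described above. The rest is a routine finite induction combined with the Borel--Cantelli lemma.
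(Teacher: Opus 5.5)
Your main line of argument is correct, and it takes a different, more explicit route than the paper. The paper fixes one ungenerated candidate $u$ and says that each expansion of ``its parent node'' yields $u$ with probability at least $\varepsilon p_u$. It bounds the probability of still missing $u$ after $K$ such expansions by $(1-\varepsilon p_u)^K\to 0$, then takes a union over the finitely many candidates. That sketch silently presupposes two things: that $u$ already has a generated radius-$1$ neighbour to act as its parent, and that this parent is expanded unboundedly often.

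Your monotone-growth induction on $G_t$ makes both requirements visible. Connectivity of the add/remove/swap graph (Step~1) guarantees a frontier node at every stage, which is what actually licenses ``applying this argument to all remaining candidates''. The divergence of the UCT bonus in Step~2 supplies the repeated visits that the paper only posits afterwards as Assumption~\ref{asp:uct}. Your exhaustion argument also dispenses with a fixed lower bound $p_u$, since every random expansion at $v$ consumes a distinct neighbour. The paper's version buys an explicit geometric tail; yours buys an honest account of what is needed. You should state that Step~1 requires $\mathcal{S}$ to be the whole power set of the attribute universe (or at least closed under your remove-then-add paths), since feasibility is never defined.

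On the fully-expanded cap, only your first reading works, and it is exactly what the paper tacitly uses when it lets $K\to\infty$. Under the literal rule that a node is fully expanded once it has three children, each node is expanded only finitely often, and the conclusion itself can fail. Take the universe $\{a,b\}$, root $\{a,b\}$, and an LLM that always returns $\{a,b\}$ (permitted, since $\tau$ is many-to-one). Because random modification never regenerates a set, $\{a\}$ and $\{b\}$ are each represented by a single node. With positive probability both receive all three of their children from the LLM branch, after which $\emptyset$ can never be produced. So your fallback claim, that some not-fully-expanded node always represents a frontier candidate, is false in general and should be dropped.

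Under the first reading, tighten Step~2 slightly. Selection halts at any still-eligible ancestor of $v$. This is harmless: such an ancestor exhausts its own finite neighbourhood in almost surely finite time by the same $\varepsilon$-coin argument, after which UCT descends past it.
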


\begin{proof}[Sketch]
Consider any candidate $u$ that has not yet been generated. Whenever its parent node is expanded, the random expansion branch is taken with probability $\varepsilon>0$, and conditional on that branch, $u$ is selected with some fixed probability $p_u>0$. Since duplicate generations are forbidden, the probability that $u$ is never generated after $K$ such expansion trials is at most $(1-\varepsilon p_u)^K$, which converges to $0$ as $K\to\infty$. Because the candidate space is finite, applying this argument to all remaining candidates ensures that each will be generated in finite time with probability~$1$.
\end{proof}

\begin{assumption}[UCT Selection: Infinite Visits]\label{asp:uct}
Internal selection follows the UCT rule. As visit counts grow, every edge (in particular, the optimal child on each prefix of the witnessing path) is visited infinitely many times as $t\to\infty$.
\end{assumption}


\noindent \textbf{Strong vs.\ Weak Hits.}
A \emph{strong hit} occurs when the optimal node $v^\star$ is first generated and its reward $R(v^\star)$ is backpropagated, or when a selection reaches the already-generated $v^\star$ and backpropagates $R(v^\star)$. A \emph{weak hit} denotes a non-optimal expansion.
Let the $j$-th \emph{block} be the interval between two consecutive strong hits $\sigma_{j - 1}\!\to\!\sigma_j$ and define the root endpoint difference
\begin{equation}
W_j := Q^{(\mathrm{end\ of\ block}\ j)}_{v_0} - Q^{(\mathrm{start\ of\ block}\ j)}_{v_0},\qquad |W_j|\le \Omega.
\end{equation}
Let $E_j := |R(s^\star)-Q^{(\mathrm{after}\ \sigma_j)}_{v_0}|$ be the root error after the $j$-th strong hit.

\noindent \textbf{Single Strong Hit Kernel.}
Writing the one-step (single strong hit) recursion layerwise as in Equation~\eqref{eq:maxavg} gives, for $d=0, \ldots, H - 2$,
\begin{equation}
e'_d \le \tfrac12 e_d + \tfrac12 e'_{d + 1},\qquad e'_H=0,
\end{equation}
which compactly yields
\begin{equation}\label{eq:single-kernel}
e' \leq \tfrac12 \big( I- \tfrac12 S \big)^{-1} e
= \sum_{k = 0}^{H} \frac{1}{2^{k + 1}} S^k e.
\end{equation}
In particular, for the root component
\begin{equation}\label{eq:cH}
\big| e'_0 \big| \leq \sum_{k = 0}^{H} \frac{1}{2^{k+1}} \lvert e_k \rvert
\leq \Big( 1 - \frac{1}{2^{H + 1}} \Big) \, \| e \|_\infty
=: c_H \|e\|_\infty, \quad c_H \in (0,1).
\end{equation}

\noindent \textbf{Block Recursion.}
Combining the single-hit contraction at the block boundary with the intra-block drift $W_j$, we obtain
\begin{equation}\label{eq:block-recursion}
E_j \leq c_H E_{j-1} + c_H |W_j|, \qquad c_H = 1 - 2^{-(H + 1)}.
\end{equation}

\noindent \textbf{ISS Baseline.}
Iterating Equation~\eqref{eq:block-recursion} gives
\begin{equation}\label{eq:ISS}
E_m \leq c_H^{ \, m} E_0 + c_H \sum_{r = 1}^{m} c_H^{\,m - r}|W_r|,
\end{equation}
i.e., geometric stability modulo the disturbance sequence $\{|W_r|\}$.

\noindent \textbf{Why $\boldsymbol{|W_j|\!\to\!0}$.}

\emph{(i) Bottom-up propagation starts at $v^\star$.}
Once the optimal leaf $v^\star$ is discovered, its value fixes at $Q_{v_H}=R(v^\star)$.
When its parent $u=v_{H - 1}$ is updated, the max–average backup
\begin{equation}
Q_u \leftarrow \tfrac12 \!\Big( Q_u + \max\{\, R(u), \, Q_{v^\star} \, \} \Big)
\end{equation}
contracts $Q_u$ toward $\max \{R(u), R(v^\star) \} = R(v^\star)$. Hence repeated revisits \emph{along the optimal child} drive $Q_u \! \to \! R(v^\star)$; once $Q_u$ is close enough to $R(v^\star)$, UCT at $v_{H - 2}$ increasingly favors $u$ as its optimal child, and the same argument repeats upward.

\emph{(ii) UCT makes shallow weak selections asymptotically negligible.}
Under deterministic rewards, UCT at any internal node $v_d$ asymptotically favors its optimal child. Consequently, for each fixed depth $d$, the empirical ratio of selecting any strictly suboptimal child tends to zero as visits grow. On the witnessing path $(v_0, \ldots, v_{H - 1})$, this implies that updates using the optimal child dominate in the limit at every shallow prefix, while weak selections occur only finitely many times or with vanishing frequency.

\emph{(iii) Vanishing block-end drift.}
Within block $j$, let $W_j := Q_{v_0}^{\mathrm{end}(j)} - Q_{v_0}^{\mathrm{start}(j)}$ denote the raw root drift accumulated by weak updates before the block-ending strong hit. The subsequent strong hit applies the single-hit contraction to \emph{both} the inherited error and this accumulated drift, so its contribution to the post-block error is exactly $c_H|W_j|$ in the recursion
\begin{equation}
E_j \; \leq \; c_H E_{j - 1} \; + \; c_H|W_j|.
\end{equation}
By (i) and (ii), UCT makes shallow weak selections asymptotically negligible while bottom-up propagation repeatedly pulls ancestors toward $R(v^\star)$; hence the raw drift $|W_j| \to 0$. 
Therefore the contracted disturbance $c_H|W_j|$ also vanishes, yielding $E_j \to 0$.

Remark that because $p^\star$ ignores the children generated by LLM Self-Refine, it is conservative; in practice, convergence is typically much faster.
\section{Prompt Templates}

\subsection{Initialization Prompt}

\small
\begin{framed}
\noindent \textbf{Role:} system \newline
\noindent \textbf{Content:} \newline
\noindent You are now an expert in data governance, first I'll give you a definition, a requirement and some examples, and I need you to remember them for the following request. \newline

\noindent \textbf{Role:} user \newline
\noindent \textbf{Content:} \newline
\noindent Definition: \newline
\noindent Unpivot: Transforming multiple horizontally arranged numeric columns into vertical attribute-value pairs, preserving identifier columns, where original column names become values in a new attribute column and their corresponding data is consolidated into a unified value column. \newline
\noindent Requirement: \newline
\noindent Your task is to detect the attributes that can be unpivoted in the source table. A source table and a target table for reference will be provided. Your answer should be in JSON format and no explanation is needed. For example, if the attributes to be unpivoted is $\left[ \text{A}, \text{B}, \text{C}, \text{D} \right]$, your answer should be $\{ ``\text{unpivot}\_\text{columns}": \left[ ``\text{A}", ``\text{B}", ``\text{C}", ``\text{D}" \right] \}$. If no attribute is in the unpivot subset, answer with an empty unpivot\_columns array, that is, $\{ ``\text{unpivot}\_\text{columns}": \left[ \right] \}$. And remember that your JSON string should be pure text, do not put it in a code block. \newline
\noindent Example: \newline
\noindent For input attributes $\left[ \text{Product}, \text{Jan}\_\text{Sales}, \text{Feb}\_\text{Sales} \right]$, the corresponding output attribute is $\left[ \text{Jan}\_\text{Sales}, \text{Feb}\_\text{Sales} \right]$, and the output answer should be $\{``\text{unpivot}\_\text{columns}": \left[ ``\text{Jan}\_\text{Sales}", ``\text{Feb}\_\text{Sales}" \right] \}$. \newline
\noindent Example: \newline
\noindent For input attributes $\left[ \text{Trade}, \text{Date}, \text{Quantity} \right]$, the corresponding output attribute is $\left[ \right]$, because there is no attribute to be unpivoted, and the output answer should be $\{``\text{unpivot}\_\text{columns}": \left[ \right] \}$. \newline

\noindent \textbf{Role:} assistant \newline
\noindent \textbf{Content:} \newline
\noindent Got it! Please provide the source and target tables so I can determine the unpivot columns and provide the JSON output. \newline

\noindent \textbf{Role:} user \newline
\noindent \textbf{Content:} \newline
\noindent \#\#\# Identify the columns that can be unpivoted in a list of column names and with no explanation. \newline
    \noindent \#\#\# Source column names: \newline
    \noindent \# \newline
    \noindent \# $\ldots$ \textcolor[rgb]{0.7,0.7,0.7}{\textit{source attributes}} \newline
    \noindent \# \newline
    \noindent \#\#\# Description: \newline
    \noindent \# \newline
    \noindent \# $\ldots$ \textcolor[rgb]{0.7,0.7,0.7}{\textit{attribute descriptions}} \newline
    \noindent \# \newline
    \noindent \#\#\# Target column names for reference: \newline
    \noindent \# \newline
    \noindent \# $\ldots$ \textcolor[rgb]{0.7,0.7,0.7}{\textit{target attributes}} \newline
\end{framed}
\normalsize

\subsection{Feedback Prompt}

\small
\begin{framed}
\noindent \textbf{Role:} system \newline
\noindent \textbf{Content:} \newline
\noindent You are now an expert in data governance and schema matching, and provides feedback on the quality of unpivot detection. \newline

\noindent \textbf{Role:} user \newline
\noindent \textbf{Content:} \newline
\noindent \#\#\# Evaluate the columns selected to be unpivoted from the source table. The selection aims to transfer the source table to the target table. You should focus on the transformation between the source and target table structure rather than the meaning of unpivot. The provided sample data may have been anonymised. Analyze this answer strictly and critically, point out every flaw for every possible imperfection about the selection. You only need to evaluate the selection of unpivot subset itself. Note that the selected subset is under loose limits, your task is to reduce the size of the subset if there exists redundant attributes in the subset. Remember the attributes should be selected from the source attributes, do not use names that do not exist. \newline
\noindent \#\#\# Source column names:\newline
\noindent \# \newline
\noindent \# $\ldots$ \textcolor[rgb]{0.7,0.7,0.7}{\textit{source attributes}} \newline
\noindent \# \newline
\noindent \#\#\# Description: \newline
\noindent \# \newline
\noindent \# $\ldots$ \textcolor[rgb]{0.7,0.7,0.7}{\textit{attribute descriptions}} \newline
\noindent \# \newline
\noindent \#\#\# Target column names for reference: \newline
\noindent \# \newline
\noindent \# $\ldots$ \textcolor[rgb]{0.7,0.7,0.7}{\textit{target attributes}} \newline
\noindent \# \newline
\noindent \#\#\# Sample data from source table: \newline
\noindent \# \newline
\noindent \# $\ldots$ \textcolor[rgb]{0.7,0.7,0.7}{\textit{sample data from source table}} \newline
\noindent \# \newline
\noindent \#\#\# Sample data from target table: \newline
\noindent \# \newline
\noindent \# $\ldots$ \textcolor[rgb]{0.7,0.7,0.7}{\textit{sample data from target table}} \newline
\noindent \# \newline
\noindent \#\#\# Selected columns for unpivot: \newline
\noindent \# \newline
\noindent \# $\left[ \ldots \textcolor[rgb]{0.7,0.7,0.7}{\textit{attributes}} \right]$ \newline
\end{framed}
\normalsize

\subsection{Refine Prompt}

\small
\begin{framed}
\noindent \textbf{Role:} user \newline
\noindent \textbf{Content:} \newline
\noindent \#\#\# Refine your selection based on the feedback. If the feedback indicates that the selection is ideal, then you can remain the selection unchanged. Note that the suggested subset provided in the feedback may contain attributes that are not in the source table, you should not totally rely on it, but rather use it as a reference and strictly select from source attributes. \newline
\noindent \#\#\# Feedback: \newline
\noindent \# \newline
\noindent \# $\ldots$ \textcolor[rgb]{0.7,0.7,0.7}{\textit{feedback}} \newline
\end{framed}
\normalsize
\section{More Experiment Details}
\label{app:exp_details}

\subsection{Details of Baselines}
\label{app:baseline_details}

We compare \name~ against the following representative approaches:

\begin{itemize}[leftmargin=*]
\label{exp:baseline}
\item \textbf{COMA 3.0}~\cite{massmann2011evolution}~\footnote{We use the implementation available from the Valentine package~\cite{koutras2021valentine}. \label{footnote:valentine}}. COMA 3.0 is a multi-matcher schema matching framework that combines linguistic, structural, and instance-based matchers by an average-based similarity aggregation. It also employs advanced strategies such as fragment matching and filtered context to efficiently handle large-scale tasks.

\item \textbf{DisB}~\cite{zhang2011automatic}~\textsuperscript{\ref{footnote:valentine}}.
DisB automatically clusters attributes into semantically coherent attributes using purely data-driven evidence. It clusters relational attributes into semantically coherent attributes via distribution similarity, and then refines them via intersection-based similarity and witness columns using correlation clustering.

\item \textbf{GRAM}~\cite{liu2024gram}.
GRAM is an LLM-based schema matching framework that integrates retrieval augmentation and prompt compression to accelerate inference while maintaining accuracy. It employs a Named Entity Recognition (NER)~\cite{chinchor1997muc} filter and a Double-RAG~\cite{lewis2020retrieval} mechanism to dynamically select relevant target attributes and few-shot examples, forming a compact and adaptive prompting process for efficient attribute alignment.

\item \textbf{NaiveP}. Naive Pipeline (NaiveP) represents a straightforward execution pipeline. It performs unpivot attribute identification and schema matching independently without iterative refinement. In this setting, the LLM first generates an unpivot attribute set, and the corresponding tables are directly fed into the schema matching component without further adjustment.
\end{itemize}

For GRAM, since we are not able to get access to its source code, we implement the competitor according to the design and the prompt provided in the paper. Different from the original implementation, we use the Qwen3 model instead of the original FLAN-T5 model for a fair evaluation of the performance.

\subsection{Details of Implementation}
\label{app:implementation_details}

We detail the hyper-parameters used in \name~ as follows.
We adopt Qwen3 provided by the Aliyun Model Studio API as the LLM model. Embeddings of attribute names are generated by the fine-tuned DistillRoBERTa~\cite{sanh2019distilbert, liu2019roberta}~\footnote{We use the SentenceTransformers Python package with sentence-transformers/all-distilroberta-v1 model in code implementation.}.
The probability for the bounded stochastic policy is set to $\varepsilon = 0.05$. The constant $C$ for UCT selection is set to $2$, which is a commonly adopted choice in MCTS-based systems~\cite{hamrick2020combining, kohankhaki2024monte}. The maximum number of child nodes is set to $3$ and the executed iterations count is set to $2$. In each iteration, $5$ threads are submitted and execute the selection and expansion in parallel.
Unless explicitly specified, all hyper-parameters are set to their default values.
All experiments are executed on macOS Sequoia 15.6 with 8 physical CPU cores and 16GB of memory. The programs are all implemented in Python.
\section{Additional Experiments}
\label{app:additional_experiments}

\subsection{Ablation Study}
\label{app:ablation_study}

\noindent \textbf{Description Information.} We conduct an ablation study on the use of description information. This information contains descriptions of the attributes of the two input tables which help LLM understand the semantics of these attributes. In practical data governance settings, description information is typically readily available, as it is recommended as a standard component of well-maintained schemas~\cite{edara2021big, foundry_overview, powercenter_edit_column}. As shown in Figure~\ref{fig:description}, for large-scale models with $235$B parameters, removing the description information leads to a clear decrease in accuracy. This indicates that description information is a useful supplementary information for LLM in matching for pivot table schema tasks, especially under circumstances that attributes contain abbreviations and terminologies. Yet, this effect is not obvious for small models due to their weaker semantic understanding capacity, which is insufficient to fully exploit the fine-grained information contained in the descriptions. This is especially evident on the large-scale Gene dataset: incorporating description substantially increases the prompt length and instead reduces accuracy.

\begin{figure}[t]
    \begin{minipage}[h]{\linewidth}
        \centering
        \includegraphics[width=0.75\linewidth]{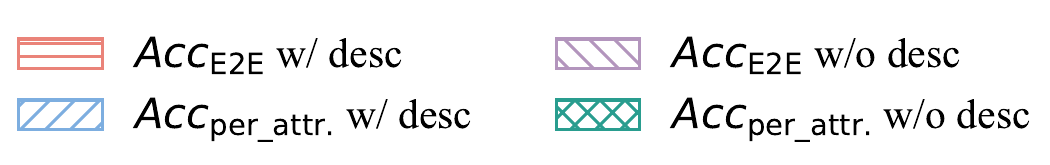}
    \end{minipage} \\
    \begin{minipage}[h]{0.49\linewidth}
        \centering
        \includegraphics[width=\linewidth]{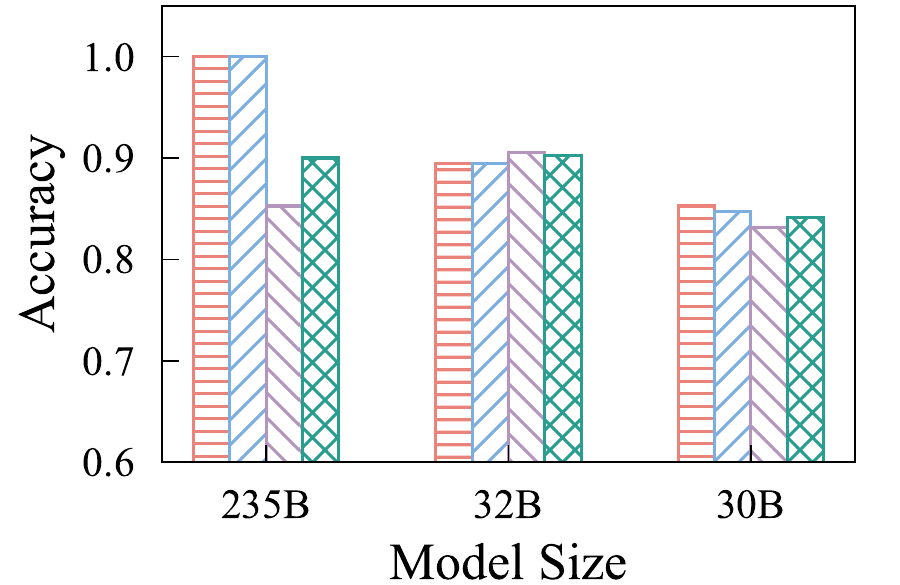}
        \subcaption{Adult}
    \end{minipage}
    \begin{minipage}[h]{0.49\linewidth}
        \centering
        \includegraphics[width=\linewidth]{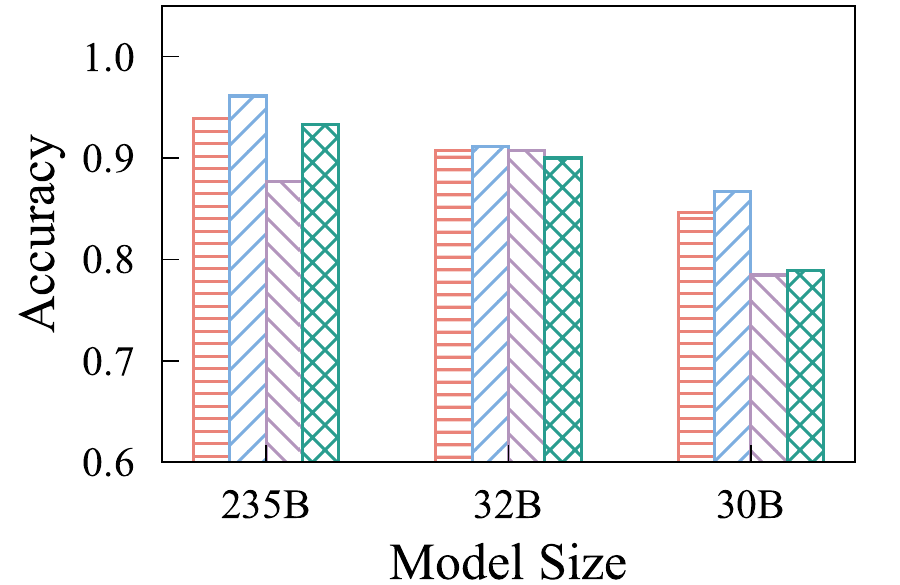}
        \subcaption{Football}
    \end{minipage} \\
    \begin{minipage}[h]{0.49\linewidth}
        \centering
        \includegraphics[width=\linewidth]{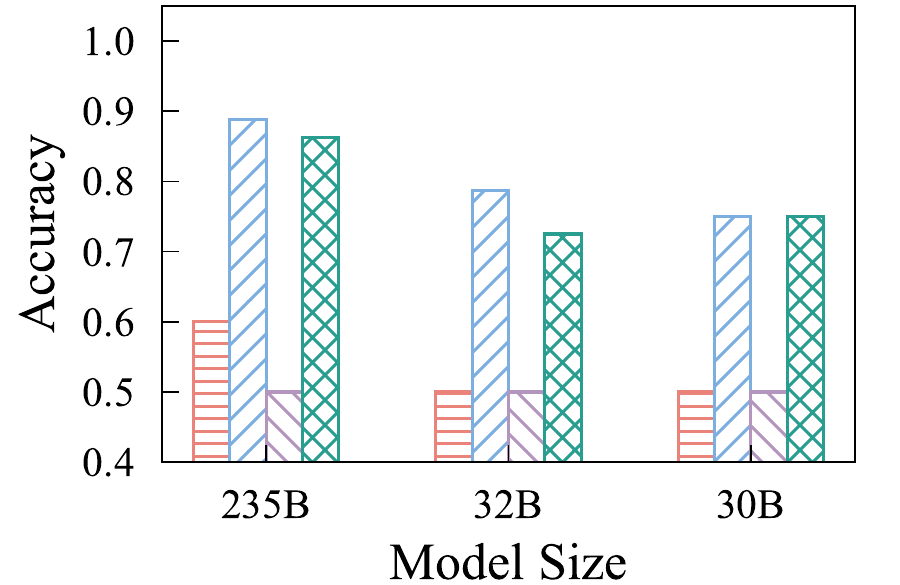}
        \subcaption{President}
    \end{minipage}
    \begin{minipage}[h]{0.49\linewidth}
        \centering
        \includegraphics[width=\linewidth]{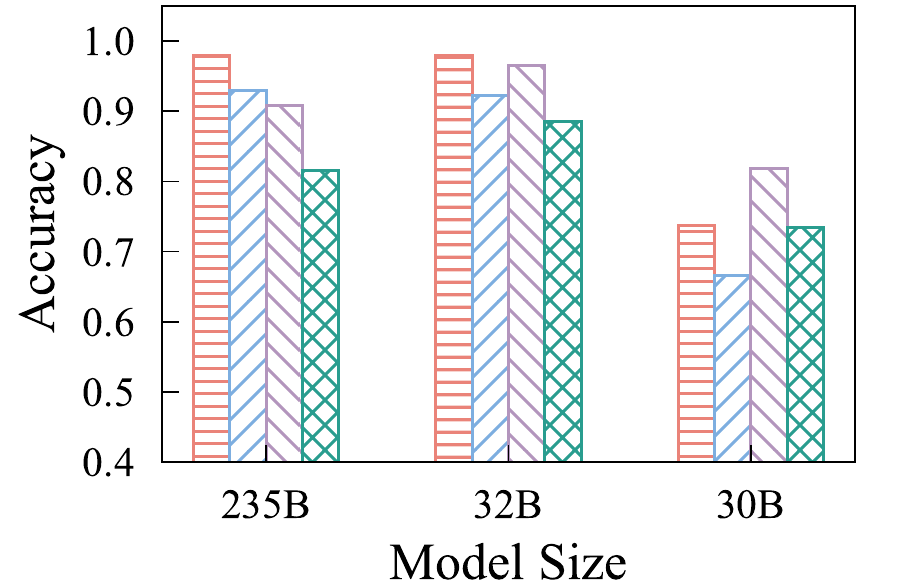}
        \subcaption{Gene}
    \end{minipage}
    \vspace{-3mm}
    \caption{Ablation study on description information}
    \label{fig:description}
    \vspace{-5mm}
\end{figure}

\begin{figure}[t]
    \begin{minipage}[h]{\linewidth}
        \centering
        \includegraphics[width=\linewidth]{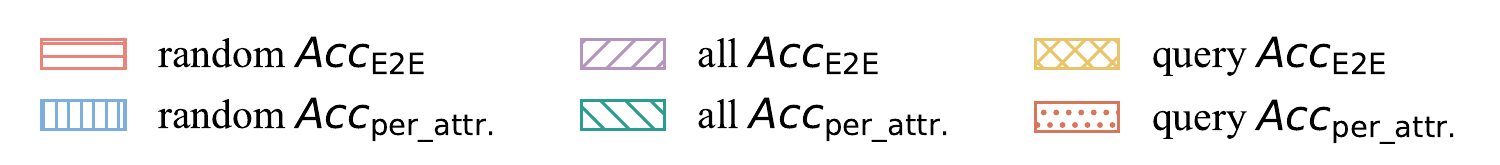}
    \end{minipage} \\
    \begin{minipage}[h]{0.49\linewidth}
        \centering
        \includegraphics[width=\linewidth]{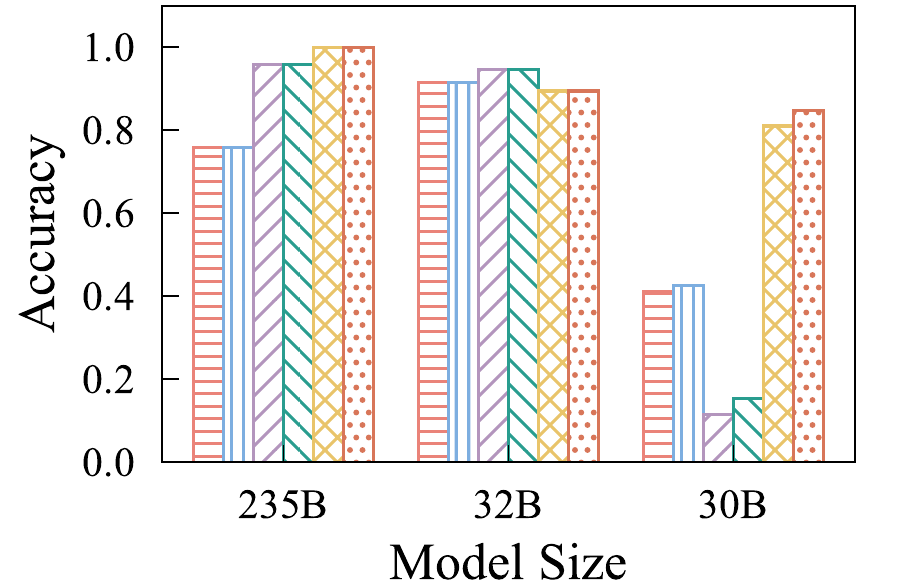}
        \subcaption{Adult}
    \end{minipage}
    \begin{minipage}[h]{0.49\linewidth}
        \centering
        \includegraphics[width=\linewidth]{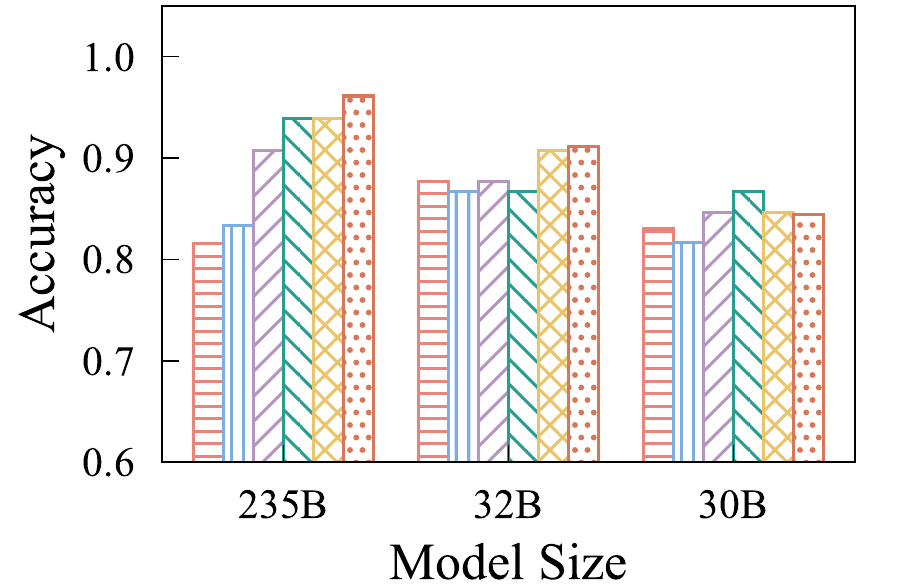}
        \subcaption{Football}
    \end{minipage} \\
    \begin{minipage}[h]{0.49\linewidth}
        \centering
        \includegraphics[width=\linewidth]{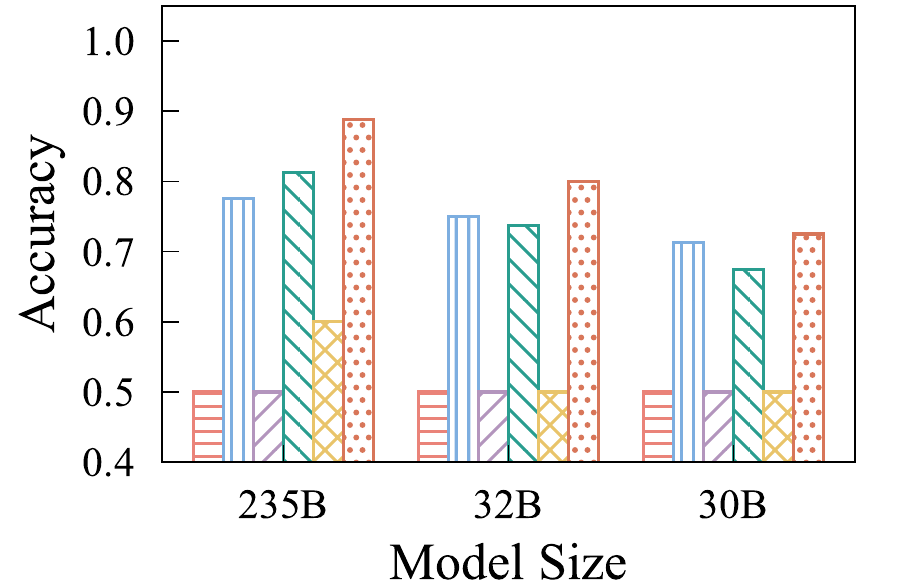}
        \subcaption{President}
    \end{minipage}
    \begin{minipage}[h]{0.49\linewidth}
        \centering
        \includegraphics[width=\linewidth]{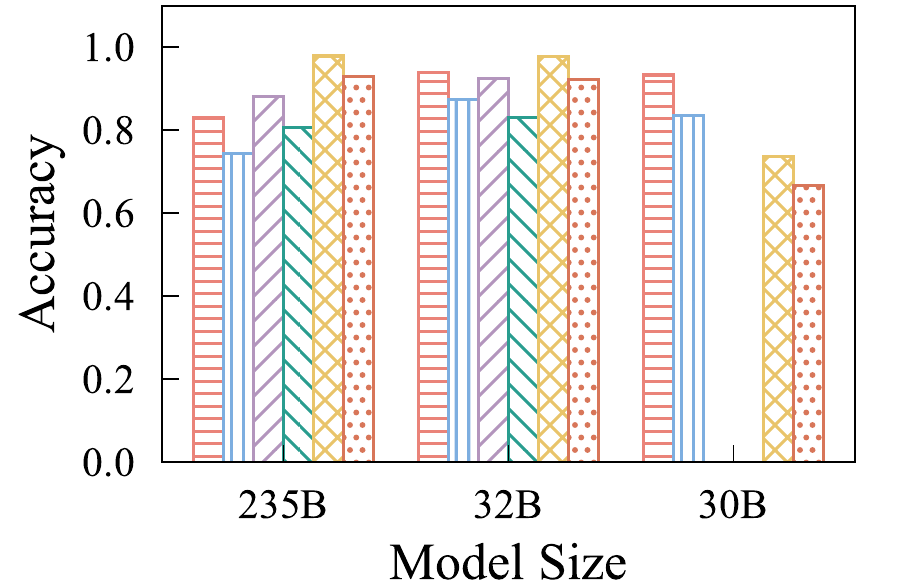}
        \subcaption{Gene}
    \end{minipage}
    \vspace{-3mm}
    \caption{Ablation study on the generation of root node}
    \label{fig:initialization}
    \vspace{-5mm}
\end{figure}

\noindent \textbf{Generation of the Root Node.} Figure~\ref{fig:initialization} plots the performance of different generation methods of the unpivot attribute set for the root node of the Monte-Carlo search tree. The method we use in \name~ is to query LLM and let it give an initial set. We compare it to two other methods: simply selecting all the source attributes and selecting random attributes. For the $235$B model, querying LLM shows notable superiority in accuracy on all datasets. This is because using the strong semantic capability of LLM, the initial set is quite close to the correct one, and subsequent refinement can be more focused and effective. Initializing with all attributes does not remove the implausible attributes for the unpivot attribute set, leading to an unnecessarily large search space for subsequent refinement. Conversely, initializing with a random set often keeps implausible attributes and omits essential ones, which requires the refinement to simultaneously infer missing structure and correct mistakes, and thus significantly increases uncertainty and leads to unsatisfying results. For LLMs with fewer parameters, however, this advantage is not evident, and on some datasets, an initial set by querying LLM may lead to a worse result. This is because the weaker semantic capability of smaller models may generate an initial set with more errors. Considering the weaker Self-Refine capability, these errors may persist and lead to an unsatisfactory result. This limitation is most evident on the Gene dataset, where the scale of the dataset is large; in such cases, the initial set generated by small-scale models is often only marginally better, or even worse, than random initialization, resulting in substantially degraded accuracy. Moreover, on Gene, initializing with all attributes leads to an excessively long prompt that makes the $30$B model unable to return a valid completion, so we cannot obtain meaningful results and thus report an accuracy of $0$. In particular, the Adult dataset shows extremely low $Acc_\text{per\_attr.}$ with an initial set of all attributes. This is because its ground-truth unpivot attribute set is highly sparse ($2$ out of $19$ attributes), making the starting point far from the correct set. Faced with such a noisy and confusing initial set, the $30$B model struggles to refine effectively and tends to treat most attributes as unpivotable, resulting in notably lower accuracy.

\subsection{Effect of Similarity Metrics}
\label{app:similarity_metrics}

In this experiment, we evaluate the effect of different similarity metrics. We implement four methods to calculate the similarity score between two attributes, namely average, LLM score only ($\text{LLM}_\text{so}$), LLM weight only ($\text{LLM}_\text{wo}$) and LLM score and weight ($\text{LLM}_\text{sw}$). Specifically, the average method calculates the similarity score by applying a simple average to the similarity scores from different dimensions; $\text{LLM}_\text{so}$ calculates the score by directly asks the LLM for a final score based on table information; $\text{LLM}_\text{wo}$ provides the scores to LLM and asks it to generate a final score by giving these scores different weight accordingly; $\text{LLM}_\text{sw}$ requires LLM to first generate a similarity score between the two attributes according to its own semantic understanding of the tables, and then weight all scores to obtain the final result.

\begin{figure}[t]
    \begin{minipage}[h]{\linewidth}
        \centering
        \includegraphics[width=0.5\linewidth]{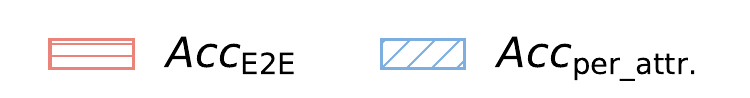}
    \end{minipage} \\
    \begin{minipage}[h]{0.49\linewidth}
        \centering
        \includegraphics[width=\linewidth]{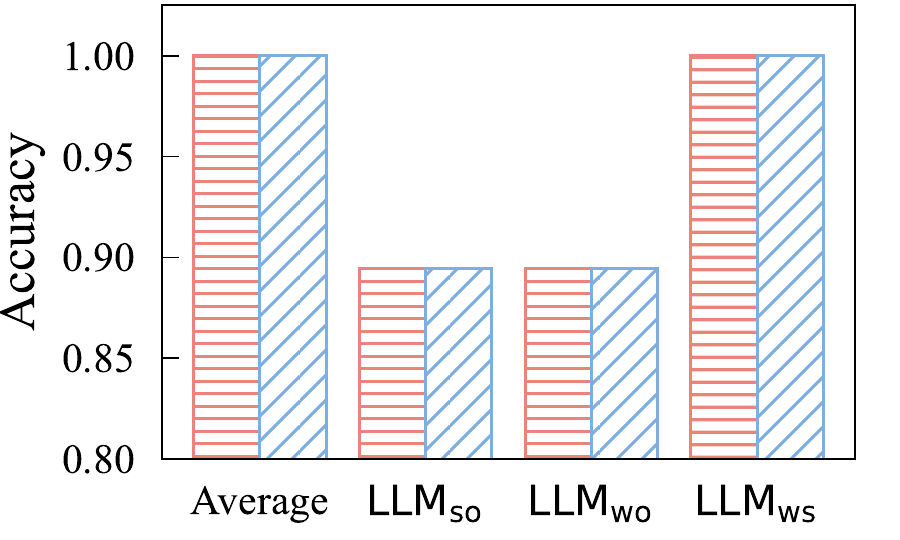}
        \subcaption{Adult}
    \end{minipage}
    \begin{minipage}[h]{0.49\linewidth}
        \centering
        \includegraphics[width=\linewidth]{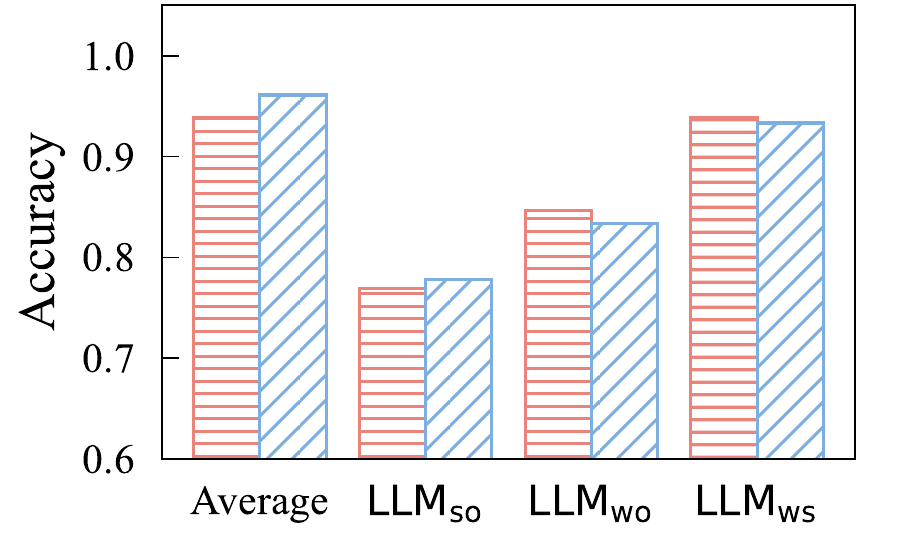}
        \subcaption{Football}
    \end{minipage} \\
    \begin{minipage}[h]{0.49\linewidth}
        \centering
        \includegraphics[width=\linewidth]{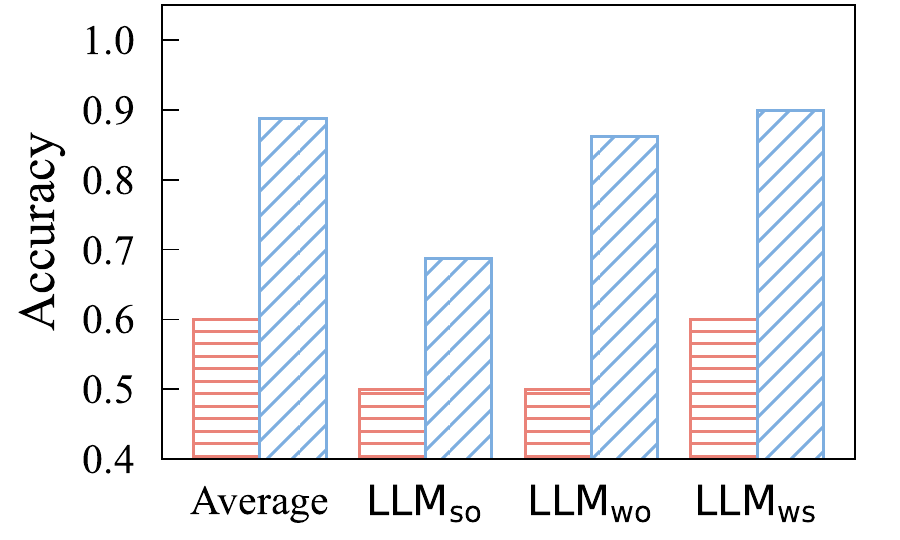}
        \subcaption{President}
    \end{minipage}
    \vspace{-3mm}
    \caption{Performance of different similarity metrics}
    \label{fig:similarity metrics}
    \vspace{-5mm}
\end{figure}

Figure~\ref{fig:similarity metrics} shows the accuracy of different similarity metrics on three datasets except the large-scale dataset Gene (issuing an LLM call for each attribute pair for large-scale datasets is unacceptable in terms of cost). Average performs the best on all the datasets, indicating that when it comes to numerical calculation and evaluation, LLM cannot provide a satisfying result. Rule-based similarity calculation is still a simple but useful method. Among the other three methods, $\text{LLM}_\text{so}$ performs the worst. This again demonstrates the limitation of LLM on similarity calculation tasks. $\text{LLM}_\text{wo}$ and $\text{LLM}_\text{sw}$ provide rule-based similarity scores to LLM, trying to make use of the semantic capability of LLM and the similarity capture capability of artificial rules. While this hybrid strategy is conceptually appealing, the experimental results suggest that LLM reasoning may interfere with consistent score aggregation rather than enhance it. Therefore, we choose the average method in \name~.

\subsection{Effect of LLM variants}

In this section we evaluate the effect of LLM variants. We run the experiment on 8 different LLMs, namely i) Qwen3-235B-A22B (Q3L), ii) Qwen3-32B (Q3M), iii) Qwen3-30B-A3B (Q3S), iv) DeepSeek-V3 (DV3), v) DeepSeek-R1 (DR1), vi) DeepSeek-R1-Distill-Llama-70b (DRL), vii) Qwen-Max-2025-01-25 (QM) and viii) OpenAI o3 (OO3). These models can be divided into 2 groups, the first 6 models are open-source models which we mainly compare considering the data privacy, and the last 2 models are closed-source models which we plot here as a reference and verify the effectiveness of \name~.

\begin{figure}[t]
    \begin{minipage}[h]{\linewidth}
        \centering
        \includegraphics[width=0.5\linewidth]{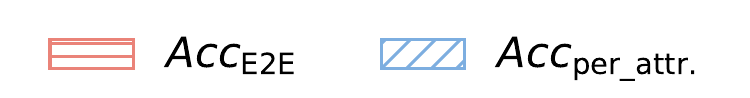}
    \end{minipage} \\
    \begin{minipage}[h]{0.49\linewidth}
        \centering
        \includegraphics[width=\linewidth]{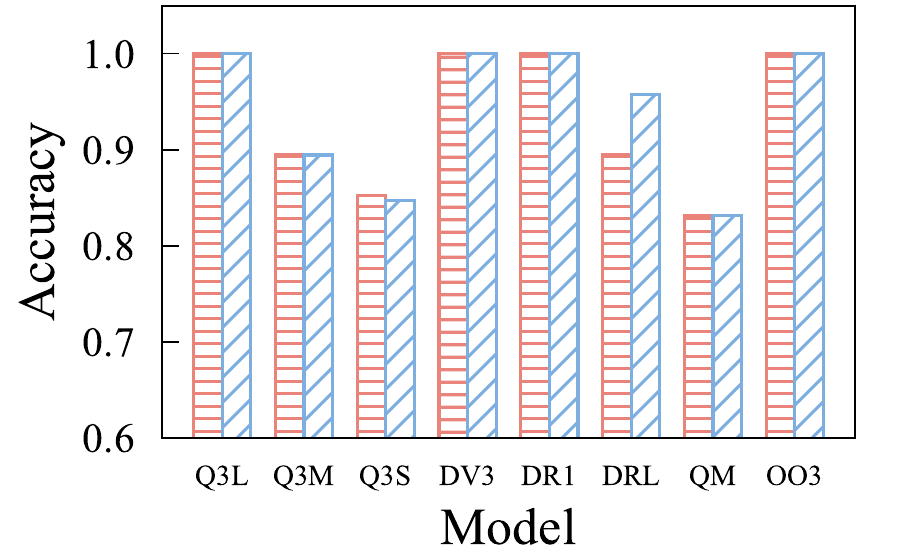}
        \subcaption{Adult}
    \end{minipage}
    \begin{minipage}[h]{0.49\linewidth}
        \centering
        \includegraphics[width=\linewidth]{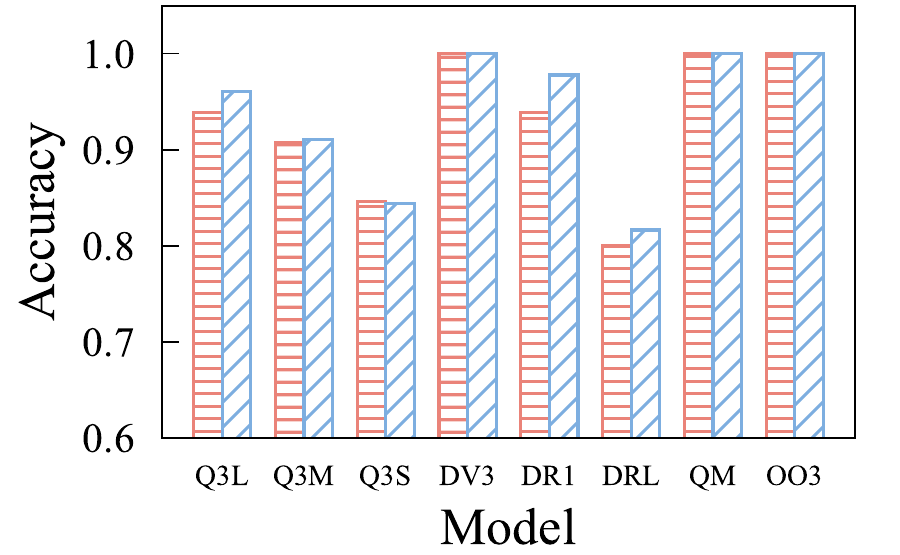}
        \subcaption{Football}
    \end{minipage} \\
    \begin{minipage}[h]{0.49\linewidth}
        \centering
        \includegraphics[width=\linewidth]{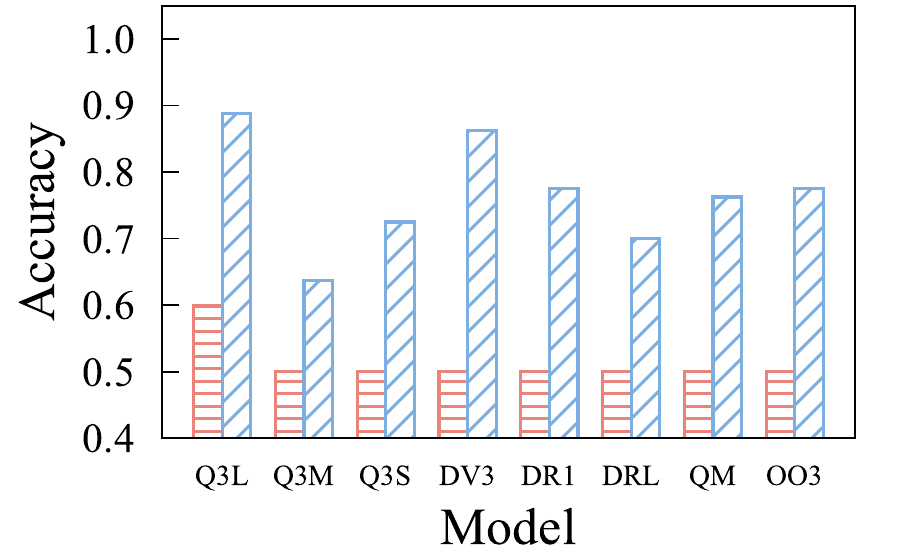}
        \subcaption{President}
    \end{minipage}
    \begin{minipage}[h]{0.49\linewidth}
        \centering
        \includegraphics[width=\linewidth]{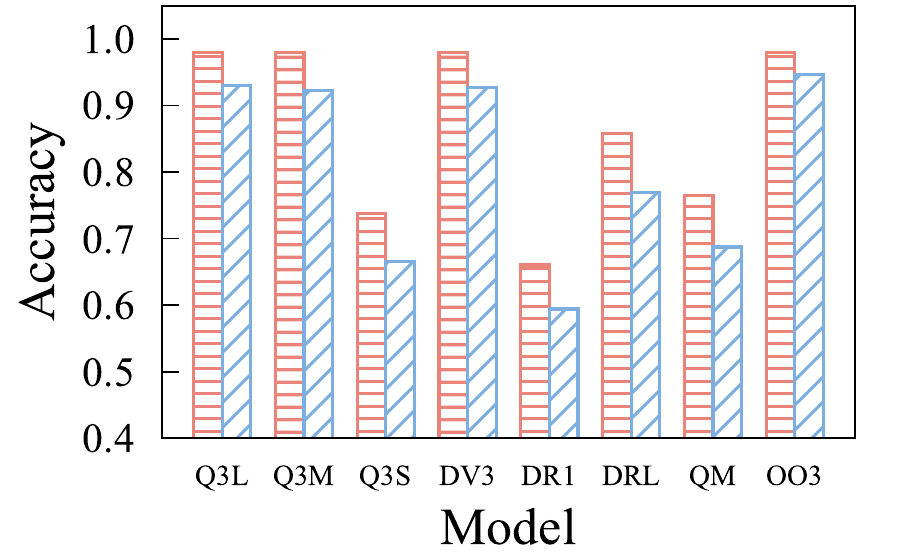}
        \subcaption{Gene}
    \end{minipage}
    \vspace{-3mm}
    \caption{Performance of different models}
    \label{fig:models}
    \vspace{-3mm}
\end{figure}

As shown in Figure~\ref{fig:models}, all the models can generate satisfying results. Although the accuracy drops as the model parameters reduce, the decline is minor and remains within an acceptable range. This allows users to freely select the model according to the usage scenario. In addition, the performance of open-source models is comparable to, and even surpasses, that of closed-source models on some datasets. This indicates that even when choosing open-source models under the constraints of data privacy requirements, the performance of \name~ will not deteriorate.









\end{document}